\newcommand{\cmark}{\textcolor{green}{\ding{51}}}%
\newcommand{\xmark}{\textcolor{red}{\ding{55}}}%
\newtheorem{theorem}{Theorem}[section]
\newtheorem{lemma}[theorem]{Lemma}
\newcommand{\vtheta}{\boldsymbol \theta}
\newcommand{\wh}{\widehat}
\newcommand{\rd}{\mathrm{d}}
\newcommand{\Or}{\mathcal{O}}
\newcommand{\wt}{\widetilde}
\newcommand{\Deltam}{\Delta_{\text{dom}}}
\newcommand{\ptail}{p_{\mathrm{tail}}}
\newcommand{\R}{\mathbb{R}}
\newcommand{\la}{\lambda}
\renewcommand{\d}{\mathrm{d}}
\renewcommand{\tilde}{\wt}
\renewcommand{\hat}{\wh}
\renewcommand{\d}{\mathrm{d}}
\DeclareMathOperator{\poly}{poly}
\DeclareMathOperator*{\argmin}{arg\,min}
\newcommand{\ZZ}{\mathbb{Z}}
\definecolor{mygreen}{RGB}{80,180,0}
\definecolor{b2}{RGB}{51,153,255}
\newcommand{\nc}{\newcommand}
\nc{\nnl}{\nn \\ &}  %new new line
\nc{\fot}{\frac{1}{2}} %frac one two
\nc{\oo}[1]{\frac{1}{#1}} % one over
\newcommand{\ben}{\begin{enumerate}}
\newcommand{\een}{\end{enumerate}}
\nc{\mc}{\mathcal}
\nc{\onenorm}[1]{\L\| #1 \R\|_1} %one norm
\DeclareMathOperator*{\argmax}{arg\,max}
\nc{\Ra}{\Rightarrow}
\nc{\zo}{\{0,1\}}
\newenvironment{breakablealgorithm}
  {% \begin{breakablealgorithm}
   \begin{center}
     \refstepcounter{algorithm}% New algorithm
     \hrule height.8pt depth0pt \kern2pt% \@fs@pre for \@fs@ruled
     \renewcommand{\caption}[2][\relax]{% Make a new \caption
       {\raggedright\textbf{\fname@algorithm~\thealgorithm} ##2\par}%
       \ifx\relax##1\relax % #1 is \relax
         \addcontentsline{loa}{algorithm}{\protect\numberline{\thealgorithm}##2}%
       \else % #1 is not \relax
         \addcontentsline{loa}{algorithm}{\protect\numberline{\thealgorithm}##1}%
       \fi
       \kern2pt\hrule\kern2pt
     }
  }{% \end{breakablealgorithm}
     \kern2pt\hrule\relax% \@fs@post for \@fs@ruled
   \end{center}
  }
\begin{document}

\title{Quantum Multiple Eigenvalue Gaussian filtered Search: an efficient and versatile  quantum phase estimation method}

\author{Zhiyan Ding}
\affiliation{Department of Mathematics, University of California, Berkeley, CA 94720, USA}
\email{zding.m@berkeley.edu}
\author{Haoya Li}
\affiliation{Department of Mathematics, Stanford University, CA 94305, USA}
\email{lihaoya@stanford.edu}
\author{Lin Lin}
\affiliation{Department of Mathematics, University of California, Berkeley, CA 94720, USA}
\affiliation{Applied Mathematics and Computational Research Division, Lawrence Berkeley National Laboratory, Berkeley, CA 94720, USA}
\affiliation{Challenge Institute for Quantum Computation, University of California, Berkeley, CA 94720, USA}
\email{linlin@berkeley.edu}
\author{Hongkang Ni}
\affiliation{Institute for Computational and Mathematical Engineering, Stanford University, CA 94305, USA}
\email{hongkang@stanford.edu}
\author{Lexing Ying}
\affiliation{Department of Mathematics, Stanford University, CA 94305, USA}
\affiliation{Institute for Computational and Mathematical Engineering, Stanford University, CA 94305, USA}
\email{lexing@stanford.edu}
\author{Ruizhe Zhang}
\affiliation{Simons Institute for the Theory of Computing, University of California, Berkeley, CA 94720, USA}
\email{rzzhang@berkeley.edu}
\maketitle

\begin{abstract}
  Quantum phase estimation is one of the most powerful quantum primitives. This work proposes a new approach for the problem of multiple eigenvalue estimation: Quantum Multiple Eigenvalue Gaussian filtered Search (QMEGS). QMEGS leverages the Hadamard test circuit structure and only requires simple classical postprocessing. QMEGS is the first algorithm to simultaneously satisfy the following two properties: (1) It can achieve the Heisenberg-limited scaling without relying on any spectral gap assumption. (2) With a positive energy gap and additional assumptions on the initial state, QMEGS can estimate all dominant eigenvalues to $\epsilon$ accuracy utilizing a significantly reduced circuit depth compared to the standard quantum phase estimation algorithm. In the most favorable scenario, the maximal runtime can be reduced to as low as  $\log(1/\epsilon)$. This implies that QMEGS serves as an efficient and versatile approach, achieving the best-known results for both gapped and gapless systems. Numerical results validate the efficiency of our proposed algorithm in various regimes.
\end{abstract}

\section{Introduction}\label{sec:intro}

Phase estimation is among the most powerful quantum primitives, offering eigenvalue estimates of a Hamiltonian $H$ when given quantum access to Hamiltonian simulation $\exp(-iHt)$ or block encoding of $H$. While initial algorithms, such as the textbook version of phase estimation~\cite{NielsenChuang2000}, depend on techniques such as the Quantum Fourier Transform (QFT) and multiple ancilla qubits, recent advances demonstrate that comparable results can be achieved with as few as one ancilla qubit. Despite the much simpler quantum circuit, these ``modern'' approaches often match or exceed the performance of their predecessors due to enhanced postprocessing techniques.  This improvement has been demonstrated in applications such as estimating the ground-state energy, which involves determining the smallest eigenvalue of $H$ and is an important application of phase estimation \cite{AbramsLloyd1999, GeTuraCirac2019, LinTong2020a, LinTong2022, dong2022ground, WanBertaCampbell2022, zwp22, wfz22, DingLin2023, ni2023lowdepth, PRXQuantum.4.040341}. More recently, this enhancement has also been evident in the broader and more general problem of Multiple Eigenvalue Estimation (MEE) \cite{Somma2019, Stroeks_2022, Cortes2021QuantumKS, doi:10.1137/21M145954X,  Dutkiewicz2022heisenberglimited, Ding2023simultaneous,ni2023lowdepth_2,shen2023estimating}.
A typical example of MEE is estimating the low-lying energies of Hamiltonian $H$, which has many applications, such as determining the electronic and optical properties of materials.

In order to solve MEE for a given quantum Hamiltonian $H\in\mathbb{C}^{M\times M}$, we assume the availability of an initial state $\ket{\psi}$ that contains several dominant modes. Specifically, let $\left\{(\lambda_m,\ket{\psi_m})\right\}^M_{m=1}$ represent pairs of eigenvalues and eigenvectors of $H$. We define $p_m=\left|\braket{\psi_m|\psi}\right|^2$ as the overlap between the initial state and the $m$-th eigenvector. Our primary assumption in this paper is the \emph{Sufficiently Dominant Condition}: there exists a set of indices $\mathcal{D}\subset {1,2,\cdots,M}$ such that $p_{\min}=:\min_{i\in\mathcal{D}}p_i>\ptail=:\sum_{i\in\mathcal{D}^c}p_i$, where $\mathcal{D}^c=\{1,2,\cdots,M\}\setminus\mathcal{D}$. The eigenvalues $\{\lambda_m\}_{m\in \mc{D}}$ are then called the dominant eigenvalues of $H$ with respect to the initial state $\ket{\psi}$ (or simply the dominant eigenvalues), and the associated eigenvectors are referred to as the dominant eigenvectors. For simplicity, we assume $\|H\|\leq \pi$, which also implies $\{\lambda_m\}_{m\in\mathcal{D}}\subset[-\pi,\pi]$. Our objective is to estimate the dominant eigenvalues $\left\{\lambda_m\right\}_{m\in\mathcal{D}}$. From a signal processing perspective, the \emph{Sufficiently Dominant Condition} allows us to differentiate signal and noise, providing a natural basis for analysis. Different versions of the condition have appeared in previous works~\cite{ni2023lowdepth_2,ni2023lowdepth,Ding2023simultaneous,DingLin2023,Stroeks_2022}. A notable instance is in ground-state energy estimation, where the condition is equivalent to the initial overlap $p_1$ between the initial state and the ground state being greater than 0.5. The $\mathcal{O}(1)$ error allowed by this condition is the key to the robustness of many algorithms based on phase estimation.

To estimate the dominant eigenvalues, we assume an oracle access to the Hamiltonian simulation $\exp(-itH)$ for any $t\in\mathbb{R}$. Specifically, given any $t\in\mathbb{R}$, we assume the ability to implement the Hadamard test circuit (see \cref{sec:idea} for details) to obtain an unbiased estimation to $\braket{\psi|\exp(-itH)|\psi}$\footnote{Throughout this paper, we assume access to $\exp(-itH)$ for any $t\in\mathbb{R}$ for simplicity of presentation. It is straightforward to extend our algorithm to the case with integer powers, where access only to $\exp(-inH)$ for $n\in\mathbb{N}$ is assumed. Refer to \cref{sec:sum} and \cref{sec:intpower} for details.}. Several quantum phase estimation algorithms~\cite{LinTong2022,ni2023lowdepth_2,ni2023lowdepth,Ding2023simultaneous,DingLin2023,Stroeks_2022,wfz22} have been developed assuming access to the Hadamard test circuit. In general, these algorithms, including the one proposed in this paper, involve three steps (refer to \cref{fig:qc} for the flowchart): 1. Generate a proper set of $\{t_n\}^N_{n=1}\subset\mathbb{R}$; 2. Execute the Hadamard test circuits with $t_n$ and obtain the dataset $\{(t_n,Z_n)\}^N_{n=1}$, where $Z_n$ is an approximation of $\braket{\psi|\exp(-it_nH)|\psi}$; 3. Classically post-process $Z_n$ to derive the estimation for $\left\{\lambda_m\right\}_{m\in\mathcal{D}}$. The efficiency of a quantum phase estimation algorithm is then quantified by two metrics: the maximal runtime denoted by $T_{\max}=\max_{1\leq n\leq N}|t_n|$, and the total runtime $T_{\mathrm{total}}=\sum^N_{n=1}|t_n|$. Here, $T_{\max}$ and $T_{\mathrm{total}}$ approximately measure the depth of the circuit and the total cost of the algorithm, respectively. Although the access to the Hamiltonian simulation $\exp(-itH)$ is assumed to be exact in this work for the sake of simplicity, we anticipate that a certain level of simulation error can be allowed as well by treating the simulation error together with the statistical errors and the effect of non-dominant eigenvectors (see \cref{sec:sum} for details).

We further define two types of spectral gaps. The first is the spectral gap between the dominant eigenvalues, denoted by $\Deltam:=\min_{i,j\in \mathcal{D}, j\not=i}\left|\lambda_i-\lambda_j\right|$. The second is the spectral gap between the dominant eigenvalues and the remaining eigenvalues, denoted by $\Delta:=\min_{i\in \mathcal{D}, j\not=i}\left|\lambda_{i}-\lambda_j\right|$. {In this paper, we use the notations $\mathcal{O}_{[a]}$, $\Omega_{[a]}$, and $\Theta_{[a]}$ to indicate that the quantity polynomially depends on the parameters in $ [a] $ if $ a $ is a number greater than one, or inversely proportional to $ [a] $ if $ a $ is a number smaller than one.}

The phase estimation algorithm proposed in this paper exhibits the following properties:
\begin{enumerate}
\item[(1)] Allow imperfect initial state: $\ptail>0.$
\item[(2)] Maintain Heisenberg-limited scaling: Assuming all other parameters remain constant, the algorithm can achieve $\epsilon$-accuracy with $T_{\mathrm{total}}=\widetilde{\mathcal{O}}(1/\epsilon)$.
\item[(3)] No gap requirement: The algorithm can achieve $\epsilon$-accuracy with $T_{\mathrm{total}}=\poly(1/\epsilon)$ for any $\epsilon>0$, where the polynomial and constants are independent of $\Delta,\Deltam$.

\item[(4)]``Short'' depth: When the spectral gap between the dominant eigenvalues $\Deltam>0$ and the precision is small enough, that is, $\epsilon=\widetilde{\mathcal{O}}_{\ptail,p_{\min},|\mathcal{D}|}(\Deltam)$, the maximal runtime $T_{\max}$ can be as small as $\widetilde{\mathcal{O}}_{p_{\min},|\mathcal{D}|}(\ptail/\epsilon)$. Here, the constant before $1/\epsilon$ approaches zero when $\ptail\rightarrow 0$. In addition, the total runtime still achieves the Heisenberg-limited scaling. More specifically, $T_{\rm total}=\widetilde{\mathcal{O}}_{\ptail,p_{\min},|\mathcal{D}|}(1/\epsilon)$.
\end{enumerate}

{Although previous algorithms may fulfill some of the mentioned properties (see \Cref{sec:previous_work} for a detailed discussion), to our knowledge, our algorithm stands out as \textbf{the first algorithm} that can be \textbf{rigorously proven} to simultaneously achieve \textbf{all} four properties, which matches the best available results in the literature.} In particular, the ``short'' circuit depth property is considered important for applications on early fault-tolerant quantum computers~\cite{DingLin2023,KatabarwaGratseaCaesuraEtAl2023}.

The algorithm presented in this paper also fulfills the following two additional properties, enhancing its efficiency compared to others:
\begin{enumerate}
\item[(5)] ``Constant'' depth: When the spectral gap between dominant eigenvalues and all other eigenvalues $\Delta>0$ and $\epsilon=\widetilde{\mathcal{O}}_{p_{\min}}(\Delta)$, we can set $T_{\max}=\widetilde{\mathcal{O}}_{p_{\min}}\left(\delta/\epsilon\log(1/\delta)\right)$ and $T_{\mathrm{total}}=\widetilde{\mathcal{O}}_{p_{\min}}\left(1/(\delta\epsilon)\right)$ for any $\delta=\Omega_{p_{\min}}(\epsilon /\Delta)$. In particular, setting $\delta=\Theta_{p_{\min}}(\epsilon/\Delta)$ gives the constant depth $T_{\max}=\widetilde{\Or}(\Delta^{-1}\log(1 /\epsilon))$, and the total cost is $T_{\mathrm{total}}=\widetilde{\mathcal{O}}_{p_{\min}}\left(\Delta \epsilon^{-2}\right)$.

\item[(6)] The quantum cost of the algorithm  depends \textit{logarithmically} on the number of dominant eigenvalues $|\mc D|$.
\end{enumerate}

Property (5) can be satisfied in some of the prior work, such as~\cite{wfz22,DingLin2023} for ground state energy estimation and~\cite{Ding2023simultaneous} for MEE (see the detailed discussion in \cref{sec:main_theory}). An application of Property (5) is to ensure the algorithm's efficiency in the presence of global depolarizing noise. In this scenario, it is important to maintain a short depth $T_{\max} = \mathcal{O}(\log(1/\epsilon))$ (setting $\delta=\Theta(\epsilon/\Delta)$) to ensure that $T_{\rm total}$ polynomially depends on $1/\epsilon$~\cite{ding2023robust}.

We find that our algorithm does not require meticulous tuning of the simulation parameters. Specifically, for the algorithm to succeed, only knowledge of the upper / lower bounds of the parameters is required (see \cref{sec:idea}).
For optimal complexity, the parameters can be chosen according to $\Delta, \Deltam, \ptail, p_{\min}$ (see \Cref{sec:main_theory} for details).

The rest of this paper is organized as follows. In \cref{sec:idea}, we introduce the main idea of our method, present the algorithm, and provide a brief summary of related works. The complexity of our algorithm is detailed in \cref{sec:main_theory}. The proofs are included in \cref{sec:pf}. \cref{sec:num} proposes several numerical examples and compares our algorithm with previous ones to justify its efficiency. A summary of this work and a discussion of possible extensions are provided in \cref{sec:sum}.

\begin{figure}[H]
\centering
\begin{center}
\includegraphics[width=0.8\textwidth]{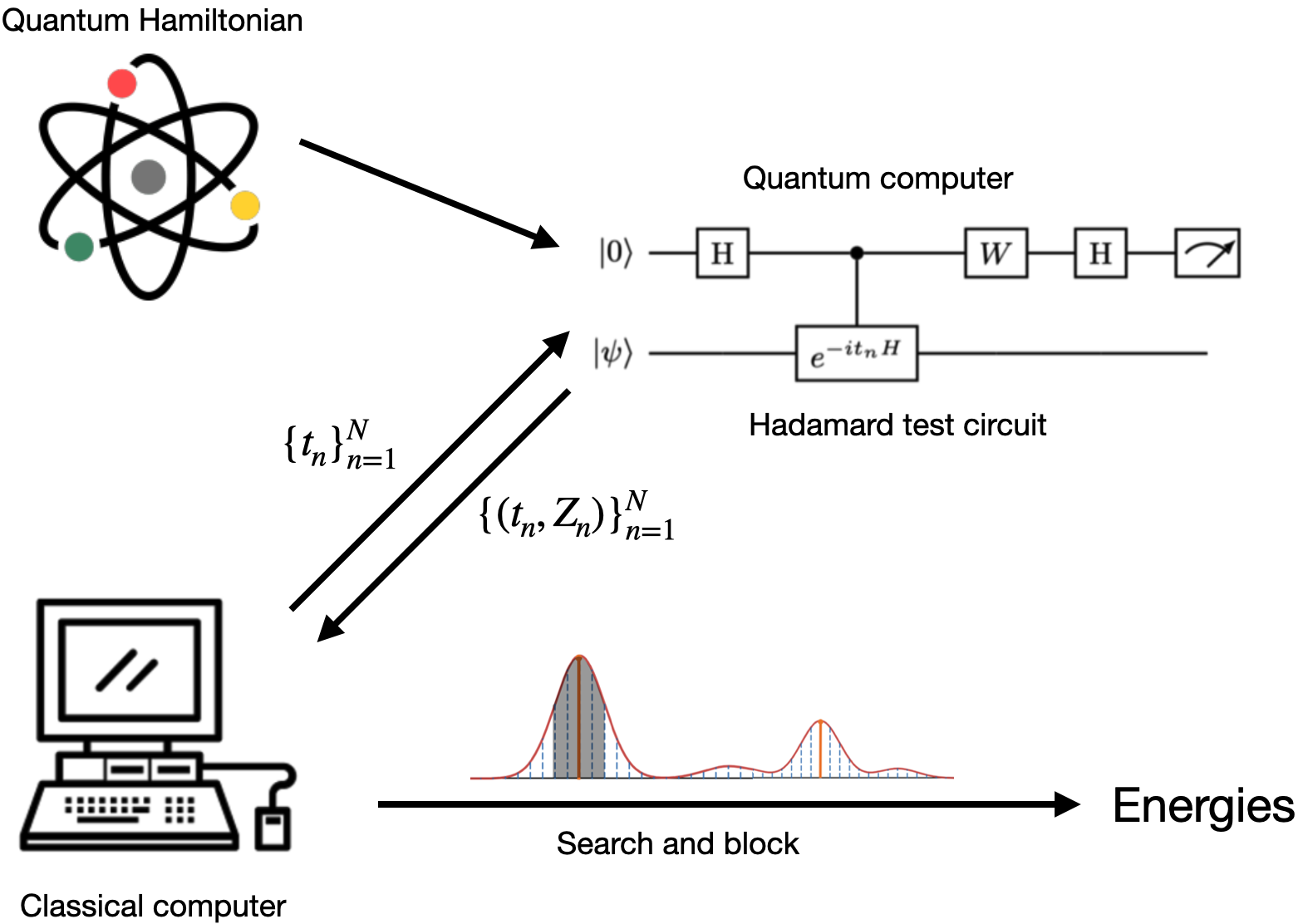}
\end{center}
\caption{Flowchart of the main algorithm. The procedure involves three steps: Firstly, a sequence of $t_n$ is generated from a truncated Gaussian using a classical computer. In the second step, the Hadamard test circuit is implemented on a quantum computer to produce the dataset ${(t_n,Z_n)}$ as defined in \eqref{eqn:dataset}. Within the Hadamard test circuit, we select $W=I$ or $W=S^\dagger$ (where $S$ is the phase gate) to estimate the real or imaginary part of $\braket{\psi|\exp(-itH)|\psi}$. In the final step, postprocessing is performed on the quantum data ${(t_n,Z_n)}$ for eigenvalue estimation.}
\label{fig:qc}
\end{figure}

\section{Main algorithm and previous work}
\label{sec:idea}

We first introduce our main idea informally. Given a Hamiltonian $H$ and an initial state $\ket{\psi}$, our approach relies only on quantum access to the Hadamard test. Specifically, for any $t\in\mathbb{R}$, we can repeat the Hadamard test with $\ket{\psi}$ and $\exp(-iHt)$ several times to obtain an unbiased estimation $Z(t)$ of the following expression:
\begin{equation}\label{eqn:signal}
\mathcal{Z}(t)=\left\langle\psi\right|\exp(-i Ht)\ket{\psi}=\sum^{M}_{m=1}p_m\exp(-i \lambda_m t)\,,
\end{equation}
meaning $\mathbb{E}(Z(t))=\mathcal{Z}(t)$.

The central subroutine of our algorithm involves a filtering-searching process. Given an even probability density $a(t)$, we independently generate $N$ samples $\{t_n\}^N_{n=1}$ from this distribution. Subsequently, we obtain approximations $\{Z(t_n)\}^N_{n=1}$ and calculate the filtering function:
\begin{equation}\label{eqn:filtering_function}
G(\theta)=\left|\frac{1}{N}\sum^N_{n=1}Z(t_n)\exp\left(i\theta t_n\right)\right|\,,
\end{equation}
{where $Z(t_n)$ is a random variable that can take values in $\{\pm 1 \pm i\}$; a detailed discussion can be found later.}

Define the Fourier transform of the probability density function $a(t)$ as follows:
\begin{equation}\label{eqn:fourier_transform_a}
F(x)=\int^\infty_{-\infty}a(t)\exp(ixt)\rd t\,.
\end{equation}
Because $\{t_n\}$ are independently sampled from $a(t)$, it is straightforward to see
\begin{equation}\label{eqn:approximation_of_G}
G(\theta)\approx \left|\mathbb{E}_{t\sim a}\left(Z(t)\exp(i\theta t)\right)\right|=\sum^M_{m=1} p_mF(\theta-\lambda_m)=:\mathcal{G}(\theta)\,,\quad \text{when}\quad N\gg 1\,.
\end{equation}
We observe that $\mathcal{G}(\theta)$ is an even function of real value, given that $a(t)=a(-t)$. In an ideal scenario, assuming that $F(x)$ reaches its maximum at $x=0$ and decays rapidly as $|x|$ increases so that $\max_{1\leq m\leq M-1}\left|F(\lambda_{m+1}-\lambda_m)\right|\ll 1$, we can derive the following approximations:
\begin{equation}\label{eqn:S_t_approximate}
G(\theta)\approx\mathcal{G}(\theta)\approx
\left\{
\begin{aligned}
&p_m F(\theta-\lambda_m),\quad \theta \text{ close to } \lambda_m\,,\\
&0,\quad \text{otherwise}\,.
\end{aligned}\right.
\end{equation}
Subsequently, we can implement the following search procedure to identify all dominant eigenvalues:
\begin{itemize}
\item First, we locate the maximum point $\vtheta_1$ of $G(\theta)$. Following \eqref{eqn:S_t_approximate}, we find $\vtheta_1\approx\lambda_{m_1}$, where $p_{m_1}=\max_{1\leq m\leq M}p_m$.

\item To approximate the next dominant eigenvalue, we establish a block interval $\mathcal{I}_{B,1}=[\vtheta_1-d,\vtheta_1+d]$ around $\vtheta_1$ with a suitable $d$. Subsequently, we identify the second maximal point $\vtheta_2$ of $\mathcal{G}(\theta)$ outside the block interval, denoted {$\vtheta_2=\argmax_{\theta\in\mathcal{I}^c_{B,1}}G(\theta)$.}

Given that $F$ concentrates around $0$, we have $p_{m_1}\exp(\theta-\lambda_{m_1})\ll 1$ when $\theta\in\mathcal{I}^c_{B,1}$. Consequently, the impact of the first dominant eigenvalue is mitigated by blocking the interval in the second search step, allowing us to show $\vtheta_2\approx\lambda_{m_2}$, where $p_{m_2}=\max_{m\neq m_1}p_m$.

\item After acquiring $\vtheta_2$, we update the block interval by defining $\mathcal{I}_{B,2}=[\vtheta_2-d,\vtheta_2+d]\cup \mathcal{I}_{B,1}$ and identify the third maximal point of $G(\theta)$ outside $\mathcal{I}_{B,2}$.

This searching and updating process is iteratively repeated until a set of $|\mathcal{D}|$ ``maximal'' points is discovered. Ultimately, we obtain an approximate set $\left\{\vtheta_{m}\right\}^{|\mathcal{D}|}_{m=1}$ corresponding to the set of dominant eigenvalues $\left\{\lambda_{m}\right\}_{m\in\mathcal{D}}$.
\end{itemize}

To ensure the success of the process, it is crucial to choose a suitable probability density function $a(t)$ such that the function $F(x)$ concentrates around $x=0$. In this paper, we adopt the \emph{truncated Gaussian density function} for $a(t)$:
\begin{equation}\label{eqn:a_T}
\begin{aligned}
 a(t)=&
    \left(1-\int^{\sigma T}_{-\sigma T}\frac{1}{\sqrt{2\pi}T}\exp\left(-\frac{s^2}{2T^2}\right)\textbf{1}_{[-\sigma T,\sigma T]}(s)\rd s\right)\delta_{0}(t)\\
    &+
    \frac{1}{\sqrt{2\pi}T}\exp\left(-\frac{t^2}{2T^2}\right)\textbf{1}_{[-\sigma T,\sigma T]}(t)\,.
\end{aligned}
\end{equation}
{Here, $\delta_0(t)$ is Dirac delta function at point $0$.}
The choice of $a(t)$ is inspired by the fact that the Fourier transform of a Gaussian function remains a Gaussian function and a recent result of spike localization \cite{li2023101577}. The parameter $\sigma$ represents the level of truncation. Specifically, when $\sigma=\infty$, we have $F(x)=\exp\left(-\frac{T^2x^2}{2}\right)$, reaching its maximum at $x=0$ and exponentially decaying to zero with respect to $T|x|$. Furthermore, the use of the truncated Gaussian ensures that the maximum runtime $T_{\max}=\max_n |t_n|$ never exceeds $\sigma T$. In the following part of the paper, we may also employ the notation $a_T(t)=a(t)$ and $F_T(x)=F(x)$ to emphasize the dependence on $T$.

Now, we are ready to introduce our main algorithm. With the motivation explained above, we propose the algorithm in two steps:

\textbf{Step 1: Data generation.}

We implement the Hadamard test quantum circuit as shown in \cref{fig:qc} to obtain our data set. Specifically, we can set $W=I$ (or $W=S^\dagger$ with $S$ being the phase gate), measure the ancilla qubit, and define a random variable $X$ (or $Y$) such that $X=1$ (or $Y=1$) if the outcome is $0$ and $X=-1$ (or $Y=-1$) if the outcome is $1$. Then
\begin{equation}\label{eqn:X_Y}
\mathbb{E}(X+iY)=\left\langle\psi\right|\exp(-i t H)\ket{\psi}\,.
\end{equation}

Given a set of time points $\{t_n\}^N_{n=1}$ drawn from the probability density $a(t)$, we apply Hadamard tests to generate the following data set:
\begin{equation}\label{eqn:dataset}
    \mathcal{D}_{H}=\left\{\left(t_n,Z_n\right)\right\}^{N}_{n=1}:=\left\{\left(t_n,X_n+iY_n\right)\right\}^{N}_{n=1}\,.
\end{equation}
{Here, $X_n$ and $Y_n$ are random variables that each take values of $-1$ or $1$.}
Each evaluation of $X_{n}$ (or $Y_{n}$) only requires running the Hadamard test circuit with $W=I$ (or $W=S^\dagger$)  once at $t=t_n$. %\hn{why emphasizing ``single'' here? Shouldn't it be $X_n$ and $Y_n$ each require one Hadamard test?}~\zd{I rewrite the sentence to avoid confusion. We emphasize single to derive that the total time is the summation of $|t_n|$.}
Referring to \eqref{eqn:X_Y}, we obtain:
{\begin{equation}\label{eqn:Zn_expect}
\mathbb{E}(Z_n)=\left\langle\psi\right|\exp(-i t_n H)\ket{\psi},\quad Z_n\in \left\{\pm 1\pm i\right\},\quad |Z_n|=\sqrt{2}.
\end{equation}}
Hence, $Z_n$ serves as an unbiased and bounded estimate of $\left\langle\psi\right|\exp(-i t_n H)\ket{\psi}$. Furthermore, it should be noted that if we employ the aforementioned method to construct the data set, the maximum simulation time is $T_{\max}=\max_{1\leq n\leq N}|t_n|$, and the total simulation time is $T_{\mathrm{total}}=\sum^N_{n=1}|t_n|$.

We summarize the data generation process in Algorithm \ref{alg:data}.
\begin{breakablealgorithm}
      \caption{Data generator}
  \label{alg:data}
  \begin{algorithmic}[1]
  \State \textbf{Preparation:} Number of data pairs: $N$; Truncated Gaussian density: $a(t)$;
  \State \textbf{Running:}
 \State $n\gets 1$;
  \While{$n\leq N$}
  \State Generate a random variable $t_n$ with the probability density $a(t)$.
  \If{$t_n>0$}
  \State Run the quantum circuit (Figure \ref{fig:qc}) with $t=t_n$ and $W=I$ to obtain $X_{n}$.
  \State Run the quantum circuit (Figure \ref{fig:qc}) with $t=t_n$ and $W=S^\dagger$ to obtain $Y_{n}$.
  \State $Z_{n}\gets X_{n}+i Y_{n}$.
  \EndIf
  \If{$t_n=0$}
  \State {$Z_n\gets 1$.}
  \EndIf
  \State $n\gets n+1$
  \EndWhile
    \State \textbf{Output:} $\left\{(t_n,Z_{n})\right\}^N_{n=1}$
    \end{algorithmic}
\end{breakablealgorithm}
\textbf{Step 2: Filtering and searching.}
After generating the dataset, we define the Gaussian filtering function as in \eqref{eqn:filtering_function}. By selecting a suitable $q$, we define the set of candidates:
\[
\theta_j=-\pi+\frac{jq}{T}\,,\quad 0\leq j\leq J:=\left\lfloor \frac{2\pi T}{q}\right\rfloor\,.
\]
Subsequently, we iterate the previously described searching procedure $K$ times using $G(\theta)$ and the set $\{\theta_j\}^{J}_{j=0}$ to obtain the approximation $\{\vtheta_k\}^K_{k=1}$.

The main algorithm of this paper is described in \cref{alg:main}. It should be noted that our algorithm is flexible in terms of parameter selection. Specifically, a meticulous adjustment of the parameters $\sigma, T, N, \alpha, q$ based on prior knowledge of ${(p_m,\lambda_m)}$ is unnecessary for the algorithm to work. When no prior knowledge is available, we recommend choosing larger values for $\sigma, T, N$. In practice, to avoid detection of repeated dominant eigenvalues during the search, we recommend choosing $q<\alpha\ll T$.  In \cref{sec:main_theory}, we provide a lower bound for $\alpha, \sigma, T, N$ (no required upper bound) and an upper bound for $q$ (no required lower bound) to ensure the algorithm's success.

\begin{breakablealgorithm}
      \caption{Quantum Multiple Eigenvalue Gaussian filtered Search (QMEGS)}
  \label{alg:main}
  \begin{algorithmic}[1]
  \State \textbf{Preparation:} Number of data pairs: $N$; Depth parameter: $T$; Block parameter: $\alpha$; Searching parameter: $q$;
  Truncated Gaussian density: $a_T(t)$; Number of dominant eigenvalues (guess): $K$;
  \State \textbf{Running:}
  \State Generate a data set of size $N$\Comment{Step 1: Generate data}
  \[
   \left\{\left(t_n,Z_{n}\right)\right\}^{N}_{n=1}
  \]
  using \cref{alg:data} with truncated Gaussian density $a_{T}(t)$.
  \State $J\gets \left\lfloor \frac{2\pi T}{q}\right\rfloor$.
  \State Generate discrete candidates: $\theta_j\gets-\pi+\frac{jq}{T}$.
  \State Calculate\Comment{Step 2: Compute the filtered density function}
  \[
  G_j\gets \left|\frac{1}{N}\sum^N_{n=1}Z_n\exp(i\theta_j t_n)\right|\,,\quad 0\leq j\leq J.
  \]
  \State Block set: $\mathcal{I}_{B,1}\gets \emptyset$. \Comment{Step 3: Find peaks}
  \State $k\gets 1$.
  \While{$k\leq K$}
  \State $j_k=\mathrm{argmax}_{\theta_j\notin \mathcal{I}_{B,k}} G_j$.
  \State $\vtheta_k\gets\theta_{j_k}$.
  \State $\mathcal{I}_{B,k+1}\gets \mathcal{I}_{B,k}\cup \left(\vtheta_k-\frac{\alpha}{T},\vtheta_k+\frac{\alpha}{T}\right)$.\Comment{Block interval to avoid finding the same peak}
  \State $k\gets k+1$
  \EndWhile

  \State \textbf{Output:} $\{\vtheta_k\}^K_{k=1}$
    \end{algorithmic}
\end{breakablealgorithm}

Finally, we emphasize that while the informal analysis provided above appears to be conceptually simple, there are still some issues that need to be addressed.
\begin{itemize}
    \item In the informal analysis, to ensure \eqref{eqn:S_t_approximate}, we require \begin{equation}\label{eqn:requirement_1}
\max_{1\leq m\leq M-1}\left|F(\lambda_{m+1}-\lambda_m)\right|\ll 1.
\end{equation}
Given our choice of $a(t)$ as specified in \eqref{eqn:a_T}, where $F(x)\approx\exp\left(-\frac{T^2x^2}{2}\right)$, ensuring that \eqref{eqn:requirement_1} asks for \begin{equation}\label{eqn:requirement_T_bad}
T_{\max}>T>\frac{1}{\min_{1\leq m \leq M-1}|\lambda_{m+1}-\lambda_m|}=\Omega\left(M\right),
\end{equation}
which poses an undesirable requirement in the context of quantum phase estimation\footnote{{For example, in ground state energy estimation, it is well known that $T_{\max}$ can be independent of the spectral gap $\lambda_2-\lambda_1$.}}.
Here, the last equality is based on the assumption that all the eigenvalues of $H$ belong to the interval $[-\pi,\pi]$.

\item In the informal derivation, we ignored the finite sampling and measurement errors and replaced $G(\theta)$ directly with $\mathcal{G}(\theta)$. In practice, however, ensuring small finite sampling and measurement errors requires a large number of samples and is not ideal in quantum computing.
\end{itemize}
In \cref{sec:main_theory}, we present several results that demonstrate that our algorithm can autonomously address the two concerns mentioned above. {We will show that under condition $p_{\min}>\ptail$, \eqref{eqn:S_t_approximate} and \eqref{eqn:requirement_T_bad} are not necessary. Specifically, even if $T$ does not satisfy \eqref{eqn:requirement_T_bad}, and \eqref{eqn:S_t_approximate} does not hold, the set $\{\vtheta_k\}^{|\mathcal{D}|}_{k=1}$ still serves as an approximation to the set of dominant eigenvalues in a meaningful sense.} Moreover, even when $G(\theta)$ is not $\mathcal{O}(\epsilon)$ close to $\mathcal{G}(\theta)$, the search process remains stable enough to ensure the accuracy of the approximation.

We also remark here that the choice of distribution $a(t)$ is not confined to the form specified in \eqref{eqn:a_T}, and it can be tailored based on the setting of the problem. For instance, in scenarios where only a unitary operator $U$ is provided as a black box, and the goal is to retrieve its eigenvalues $e^{i\lambda_m}$, the power $t$ is constrained to integers and thus $a(t)$ needs to be a distribution on integers. A detailed discussion is given in \Cref{sec:intpower}.

\subsection{Comparison with previous work}\label{sec:previous_work}
In this section, we review previous multiple eigenvalue estimation algorithms. A summary of the comparison between different algorithms is listed in \cref{table:1}.

The first work using the Hadamard test circuit to achieve Heisenberg-limited scaling is \cite{LinTong2022}.  Several works have been developed to achieve Heisenberg-limited scaling for multiple eigenvalue estimation and without relying on any gap assumptions (Properties (2) and (3)\footnote{Rigorously speaking, Heisenberg-limited scaling without gap requirement surpasses the strength of Properties (2) and (3) stated in \cref{sec:intro}. It implies that the algorithm can achieve Heisenberg-limited scaling without any gap assumption. Specifically, for any $\epsilon>0$, the algorithm achieves $\epsilon$-accuracy with $T_{\mathrm{total}}=\tilde{\mathcal{O}}_{\ptail,p_{\min}}(1/\epsilon)$, where the constant is independent of $\Delta,
\Deltam$.  As demonstrated in \cref{thm:regime_1}, our algorithm achieves Heisenberg-limited scaling without relying on any gap assumptions.}). For example,~\cite{Dutkiewicz2022heisenberglimited} introduced a method for estimating multiple eigenvalue phases that extends
the idea of robust phase estimation (RPE)~\cite{Higgins_2009,PhysRevA.104.069901,ni2023lowdepth} to multiple eigenvalues
and achieves Heisenberg-limited scaling without relying on any assumptions about the spectral gap. However, the theoretical analysis in~\cite{Dutkiewicz2022heisenberglimited} is based on the assumption that all non-dominant modes vanish, as defined in~\cite[Definition 3.1 and Theorem 4.5]{Dutkiewicz2022heisenberglimited}, specifically $p_{\mathrm{tail}}=0$.
This drawback has been resolved by a recent work~\cite{ni2023lowdepth_2}. The robust multiple phase estimation (RMPE) method in \cite{ni2023lowdepth_2} extends the Kitaev-type RPE method~\cite{PhysRevA.104.069901,ni2023lowdepth} to the multiple eigenvalue estimation problem by using adaptive simulation time amplifying factors and suitable signal processing algorithms. In contrast to~\cite{Dutkiewicz2022heisenberglimited}, the algorithm proposed in~\cite[Section III]{ni2023lowdepth_2} can achieve Heisenberg-limited scaling without the requirement of a gap between dominant eigenvalues, even when dealing with an imperfect initial state ($\ptail>0$) (Properties (1)-(3)). The design of the algorithm is based on a dedicated line spectrum estimation algorithm \cite{li2023101577}, which employs the same Gaussian filtering function as employed in this work. Moreover, in scenarios where a gap exists between dominant eigenvalues, the algorithm presented in \cite[Section V]{ni2023lowdepth_2} combines the line spectrum estimation algorithm with ESPRIT to estimate the dominant eigenvalues with a short circuit depth (Property (4)). Specifically, to achieve $\epsilon$-accuracy, the algorithm requires $T_{\max}=\mathcal{O}_{p_{\min},|\mathcal{D}|}(\max\{\ptail^{-1}\Deltam^{-1}, \ptail \epsilon^{-1}\})$ and $T_{\mathrm{total}}=\mathcal{O}_{p_{\min},|\mathcal{D}|}(\ptail^{-2}\Deltam^{-1}+\ptail^{-1} \epsilon^{-1})$, resulting in a short circuit depth as long as $\epsilon = \mathcal{O}(\ptail^2\Deltam)$.
Compared to the algorithms outlined in~\cite[Sections III, V]{ni2023lowdepth_2}, the quantum cost of QMEGS exhibits a logarithmic dependence on $|\mathcal{D}|$, and QMEGS can further achieve the ``constant'' depth property in the presence of a spectral gap between dominant and other eigenvalues. In addition, QMEGS does not require meticulous parameter adjustment (e.g., $\sigma, T, N, \alpha, q$) based on prior knowledge of $\{(p_m,\lambda_m)\}$, and can be more flexible than the algorithms proposed in~\cite{ni2023lowdepth_2}.

An optimization-based signal processing method called the multi-modal, multi-level quantum complex exponential least squares (MM-QCELS) method has recently been proposed ~\cite{Ding2023simultaneous}, which generalizes the quantum complex exponential least squares (QCELS) method~\cite{DingLin2023} to the setting of multiple eigenvalues. When $\Deltam>0$, MM-QCELS can approximate these dominant eigenvalues with Heisenberg limited scaling, ``short'', or ``constant'' circuit depth (Properties (1), (2), (4), (5)). Furthermore, the quantum cost of MM-QCELS only depends logarithmically on $|\mathcal{D}|$ (Property (6)).  To achieve Heisenberg-limited scaling, MM-QCELS needs to generate a sequence of datasets on quantum computers. The classical optimization procedure of MM-QCELS to find $K$ dominant eigenvalues solves an optimization problem in a $K$-dimensional space (see \cref{sec:summary_numerical_method} for a brief overview). In the worst-case scenario, the classical post-processing cost can grow exponentially in $\epsilon^{-K}$.
Compared to MM-QCELS, QMEGS has a much simpler data generation and searching process. The algorithm only requires a single dataset generated by a single $T$. Additionally, leveraging block intervals in the searching process, QMEGS can achieve $\epsilon$-accuracy with Heisenberg-limited scaling even when $\epsilon=\Omega(\Deltam)$. Regarding the classical processing cost, QMEGS only requires evaluating the filter function at a finite number of discrete points in $[-\pi,\pi]$ and the number of evaluations is $\mathcal{O}(1/\epsilon)$ and is independent of $K$.

Quantum subspace diagonalization (QSD), quantum Krylov, matrix pencil, and ESPRIT methods, as highlighted in studies such as~\cite{Cortes2021QuantumKS,Huggins_2020,PRXQuantum.3.020323,PhysRevA.95.042308,Motta_2020,Parrish2019QuantumFD,PRXQuantum.2.010333,doi:10.1021/acs.jctc.9b01125,OBrienTarasinskiTerhal2019,doi:10.1137/21M145954X,shen2023estimating}, offer an alternative way to solve the eigenvalue estimation problem. These methods estimate eigenvalues by addressing specific projected eigenvalue problems or singular value problems and have proven valuable for estimating ground-state and excited-state energies across various scenarios. Despite classical perturbation theories suggesting potential challenges, such as ill-conditioned projected problems and sensitivity to noise, empirical observations indicate that these quantum methods often outperform pessimistic theoretical predictions. Recently,
\cite{Stroeks_2022} proposed and investigated the complexity of a quantum phase estimation algorithm based on ESPRIT \cite{rk89, 9000636}. It recovers the frequencies by computing the SVD of the Hankel matrix generated from samples of the signal. Assuming a sufficiently large dominant spectral gap $\Delta_{\rm dom}$ and $\ptail=0$, \cite{Stroeks_2022} establishes that ESPRIT can achieve $\epsilon$ precision with $T_{\max}=\wt{\cal O}(\mathrm{poly}(|\mathcal{D}|)\epsilon^{-o(1)})$ and $T_{\rm total}=\wt{\cal O}(\mathrm{poly}(|\mathcal{D}|)(p_{\min}\epsilon)^{-2})$. Notably, the relationship $T_{\rm total}=\Omega(T_{\max}^2)$, which arises from the aliasing issue, poses a challenge to the ESPRIT method (as well as other ESPRIT-like methods, such as the observable dynamical mode decomposition (ODMD) method \cite{shen2023estimating}) in achieving the Heisenberg-limited scaling (see \cref{sec:summary_numerical_method} for detail).
More recently,~\cite[Section IV]{ni2023lowdepth_2} presented a multilevel ESPRIT technique to overcome the restriction of $\ptail=0$ and reach the Heisenberg limited scaling with short circuit depth (Properties (1), (2), (4)). Specifically, when $\epsilon\ll \Deltam$, the authors illustrate that multilevel ESPRIT can achieve $\epsilon$-accuracy with $T_{\max}=\widetilde{\mathcal{O}}(\mathrm{poly}(|\mathcal{D}|)\ptail(p_{\min}\epsilon)^{-1})$ and $T_{\mathrm{total}}=\widetilde{\mathcal{O}}(\mathrm{poly}(|\mathcal{D}|)(\ptail p_{\min}\epsilon)^{-1})$. In addition, despite the apparent importance of the gap assumption in the previous analysis of ESPRIT~\cite{Stroeks_2022,ni2023lowdepth_2}, our numerical experiments in \cref{sec:num} suggest that this assumption might be relaxed in real-world applications.

It should be noted that filtering techniques have also been explored in previous studies. For example, in~\cite{wfz22}, the authors introduced the Gaussian derivative filter to estimate the ground-state energy (smallest eigenvalues) and achieve a ``Constant'' depth property (Property (5)).
However, it remains unclear how to extend their algorithms to address the MEE problem while satisfying the first four properties (Properties (1)-(4)). More recently, ~\cite{PRXQuantum.4.040341} proposed an algorithm that employs a filter function based on the derivative Heaviside function. In this approach, the eigenvalues of $H$ can be approximated by identifying the local maxima of the approximate derivative of the cumulative distribution function (CDF).
As $T_{\max}$ tends to infinity, their approach aligns with QMEGS in that both methods aim to identify the local maximal point in the sum of delta functions.
However, when aiming to identify dominant eigenvalues with finite $T_{\max}$, the performance of their algorithm and its ability to achieve Heisenberg-limited scaling are unclear.

\begin{table}[h!]
\centering
\begin{tabular}{c|ccccc}
\hline
\hline
\textbf{Algorithms} & \multicolumn{4}{c}{\textbf{Properties}} & \textbf{Comments} \\
 & Allow & Heisenberg & No gap & ``Short'' &  \\
 & $\ptail>0$ & limit & requirement & depth &  \\ \hline
QEEA~\cite{Somma2019} & \cmark & \xmark & {\cmark} & \xmark &  \\ \hline
ESPRIT~\cite{Stroeks_2022} & ? & \xmark & ? & \xmark &  \\ \hline
\cite{Dutkiewicz2022heisenberglimited} & ? & \cmark & \cmark & \xmark &  \begin{tabular}[c]{@{}c@{}} $\poly(|{\cal D}|)$ quantum cost\end{tabular} \\ \hline
\cite[Theorem III.5]{ni2023lowdepth_2} & \cmark & \cmark & \cmark & \xmark &  \multirow{2}{*}{$\poly(|{\cal D}|)$ quantum cost} \\ \cline{1-5}
\cite[Theorem V.1]{ni2023lowdepth_2} & \cmark & \cmark & \xmark & \cmark &  \\ \hline
MM-QCELS~\cite{Ding2023simultaneous} & \cmark & \cmark & \xmark & \cmark & \begin{tabular}[c]{@{}c@{}}``Constant'' depth,\\ $\log |{\cal D}|$ quantum cost \\ large classical cost\end{tabular} \\ \hline
QMEGS (this work) & \cmark & \cmark & \cmark & \cmark & \begin{tabular}[c]{@{}c@{}}``Constant'' depth, \\ $\log |{\cal D}|$ quantum cost\end{tabular} \\ \hline
\end{tabular}
\caption{Comparison of the existing theoretical analysis of different methods for multiple eigenvalue estimation. {
In the table, a question mark indicates that the method potentially satisfies this property, although there is no existing result that rigorously proves it.}
For simplicity, certain properties in this table represent a slightly relaxed version of those introduced in \cref{sec:intro}. Specifically, Allow $\ptail>0$ means that the algorithm can deal with an imperfect initial state.
The Heisenberg limit implies that the algorithm can achieve $\epsilon$-accuracy with $T_{\mathrm{total}}=\widetilde{\mathcal{O}}(1/\epsilon)$, assuming that all other parameters remain constant. No gap requirement means that the algorithm can achieve $\epsilon$-accuracy with $T_{\mathrm{total}}=\poly(1/\epsilon)$, where the polynomial and constants are independent of $\Delta,\Deltam$. ``Short'' depth  means that, when $\epsilon=\widetilde{\mathcal{O}}_{\ptail,p_{\min},|\mathcal{D}|}(\Deltam)$, the algorithm can achieve $\epsilon$-accuracy with $T_{\max}=\widetilde{\mathcal{O}}_{p_{\min},|\mathcal{D}|}\left(\ptail/\epsilon\right)$ and $T_{\rm total}=\widetilde{\mathcal{O}}_{\ptail,p_{\min},|\mathcal{D}|}\left(1/\epsilon\right)$. A more detailed comparison can be found in \cref{sec:previous_work}.
}
\label{table:1}
\end{table}

\section{Statement of the main results}\label{sec:main_theory}
This section introduces the complexity result of \cref{alg:main}. The goal is to demonstrate that, across all three regimes categorized by our knowledge of the spectral gap, the algorithm consistently identifies a highly accurate approximation of the dominant eigenvalues through appropriate parameter selection. We provide the proof of the results in \cref{sec:pf}.

The three cases are divided as follows depending on the values of $T,\Delta,\Deltam$:
\begin{itemize}
    \item General case: In this case, the theoretical result does not require assumptions about $T$, $\Deltam$, and $\Delta$. Consequently, we refer to it as the \emph{No gap requirement} regime. The result is summarized in the following theorem.
    \begin{theorem}[$\forall T>0$]\label{thm:regime_1} Assume $p_{\min}>\ptail$ and $K\ge|\mathcal{D}|$. Given the probability of failure $\eta>0$, we choose the following parameters:
\begin{itemize}
    \item Block constant: $\alpha=\Omega\left(\log^{1/2}\left(\frac{1}{p_{\min}-\ptail}\right)\right)$,
    \item Searching parameter: $q=\mathcal{O}\left(\log^{1/2}\left(\frac{p_{\min}}{\ptail+(p_{\min}-\ptail)/2}\right)\right)$, $q<\alpha/3$, and $\alpha/q\in\mathbb{N}$,
    \item Truncation parameter: $\sigma=\Omega\left(\log^{1/2}\left(\frac{1}{p_{\min}-\ptail}\right)\right)$,
    \item Number of samples: $N=\Omega\left(\frac{1}{(p_{\min}-\ptail)^2}\log\left(\left(\frac{T}{q}+|\mathcal{D}|\right)\frac{1}{\eta}\right)\right)$.
\end{itemize}
Then, with probability at least $1-\eta$, we have that for each $i\in\mathcal{D}$, there exists $1\leq k_i\leq |\mathcal{D}|$ such that
\begin{equation}\label{eqn:distance_bound_regime_1}
\left|\lambda_i-\vtheta_{k_i}\right|\leq \frac{\alpha}{T}\,.
\end{equation}
In particular, for any $\epsilon>0$, to achieve
\[
\left\{\lambda_m\right\}_{m\in\mathcal{D}}\subset \cup_{k}[\vtheta_k-\epsilon,\vtheta_k+\epsilon]\,,
\]
it suffices to choose
\[
T_{\max}=\widetilde{\Theta}\left(\frac{1}{\epsilon}\right),\quad T_{\mathrm{total}}=\widetilde{\Theta}\left(\frac{1}{(p_{\min}-\ptail)^2\epsilon}\log\left(\frac{|\mathcal{D}|}{\eta}\right)\right)\,,
\]
\end{theorem}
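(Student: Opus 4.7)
The plan is to split the argument into a probabilistic step -- uniform concentration of $G$ around its expectation $\mathcal{G}$ on the candidate grid -- and a deterministic step that combines a shape analysis of $\mathcal{G}$ with an inductive correctness proof for the greedy peak-finding loop. For the concentration step, note that for any fixed $\theta_j$ the samples $Z_n e^{i\theta_j t_n}$ are independent, each bounded in modulus by $\sqrt{2}$, with mean $\mathcal{G}(\theta_j) = \sum_m p_m F_T(\theta_j - \lambda_m)$, which is real since $a_T$ is even. Applying Hoeffding's inequality to the real and imaginary parts separately and using the reverse triangle inequality yields $|G(\theta_j) - \mathcal{G}(\theta_j)| \le \tau$ with failure probability $\exp(-\Omega(N\tau^2))$. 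Taking $\tau = \Theta(p_{\min} - \ptail)$ and a union bound over the $J+1 = O(T/q)$ candidate points (together with the at most $|\mathcal{D}|$ search iterations) gives uniform control with probability at least $1-\eta$, which recovers the stated lower bound on $N$.

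Next I would establish two one-sided bounds on $\mathcal{G}$ using the identity $F_T(x) = \exp(-T^2 x^2/2) + O(\exp(-\sigma^2/2))$, valid uniformly in $x$ by a direct Gaussian tail estimate. If $|\theta - \lambda_i| \le q/T$ for some dominant $\lambda_i$, the $i$-th term alone contributes at least $p_i \exp(-q^2/2)$ up to truncation, and the choice of $q$ makes this exceed $\ptail + (p_{\min}-\ptail)/2$. Conversely, if $\theta$ is more than $\alpha/T$ from every dominant eigenvalue, each dominant contribution is at most $\exp(-\alpha^2/2)$, so $\mathcal{G}(\theta) \le \ptail + |\mathcal{D}|\exp(-\alpha^2/2) + O(\exp(-\sigma^2/2))$; the chosen $\alpha$ and $\sigma$ push this below $\ptail + (p_{\min}-\ptail)/4$. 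The resulting $\Theta(p_{\min}-\ptail)$ gap between the ``in-basin'' lower bound and the ``out-of-basin'' upper bound is exactly the margin the Hoeffding parameter $\tau$ is designed to beat, so on the high-probability event from the first step the grid values of $G$ inherit the same separation.

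The main obstacle is the inductive analysis of the greedy loop, and the place where the clean arithmetic constraints $q < \alpha/3$ and $\alpha/q \in \mathbb{N}$ earn their keep. The inductive hypothesis at step $k$ is that $\vtheta_1, \ldots, \vtheta_{k-1}$ each lie within $\alpha/T$ of distinct dominant eigenvalues, so that $\mathcal{I}_{B,k}$ is a union of disjoint radius-$\alpha/T$ intervals. For any not-yet-found dominant $\lambda_i$, one must exhibit an ``in-basin'' grid point $\theta_{j^\star} \notin \mathcal{I}_{B,k}$ with $|\theta_{j^\star} - \lambda_i| \le q/T$: the arithmetic constraints guarantee that even when $\lambda_i$ sits unluckily close to the boundary of $\mathcal{I}_{B,k}$, some grid point strictly outside the block is still within $q/T$ of $\lambda_i$. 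Combined with the shape bounds, $G$ is forced to be large at $\theta_{j^\star}$ and small at every grid point $\theta_j \notin \mathcal{I}_{B,k}$ that lies $\alpha/T$-far from every unfound dominant eigenvalue, so the argmax $\vtheta_k$ must lie within $\alpha/T$ of some unfound $\lambda_i$. After $|\mathcal{D}|$ iterations every dominant eigenvalue has been matched, which is the assertion in \eqref{eqn:distance_bound_regime_1}.

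For the quantitative $\epsilon$-accuracy claim, choosing $T = \alpha/\epsilon$ gives $\alpha/T = \epsilon$ and $T_{\max} \le \sigma T = \widetilde{\Theta}(1/\epsilon)$ with $\alpha, \sigma$ polylogarithmic in $1/(p_{\min}-\ptail)$. Since $|t_n| \le \sigma T$ almost surely, $T_{\mathrm{total}} \le N \sigma T$; substituting the $N$ from the concentration step and absorbing the $\log(T/q)$ factor into the tilde recovers the advertised $\widetilde{\Theta}((p_{\min}-\ptail)^{-2}\epsilon^{-1}\log(|\mathcal{D}|/\eta))$.
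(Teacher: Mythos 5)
Your concentration step (Hoeffding plus a union bound over the $O(T/q)$ grid points), your truncation estimate $F_T(x)=e^{-T^2x^2/2}+O(e^{-\sigma^2/2})$, and your two one-sided shape bounds on $\mathcal{G}$ are all essentially the paper's Lemma~\ref{lem:truncation_error} and the two displayed estimates in its proof of \cref{thm:regime_1}; the grid-point argument via $\alpha/q\in\mathbb{N}$ is also the paper's. The divergence, and the genuine gap, is in the combinatorial step. You claim that at step $k$ the argmax $\vtheta_k$ ``must lie within $\alpha/T$ of some \emph{unfound} $\lambda_i$,'' deducing this from ``$G$ is small at every grid point $\alpha/T$-far from every unfound dominant eigenvalue.'' But your shape bound only controls $G$ at points far from \emph{every} dominant eigenvalue, found or unfound. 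A grid point can sit just outside a previous block $(\vtheta_j-\alpha/T,\vtheta_j+\alpha/T)$ yet be very close to an already-covered eigenvalue $\lambda_{i'}$ lying near the rim of that block (e.g.\ $\lambda_{i'}=\vtheta_j+0.99\,\alpha/T$ and the candidate at $\vtheta_j+\alpha/T$); there $G\approx p_{i'}$ can exceed the value $p_{\min}e^{-q^2/2}$ attained near an unfound eigenvalue, so the argmax can legitimately land on an eigenvalue that is already covered. Your induction (each $\vtheta_k$ matched to a \emph{distinct, new} eigenvalue) therefore breaks; indeed the theorem's conclusion deliberately permits $k_i=k_j$ and permits some $\vtheta_k$ to cover nothing.

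The paper repairs this not by induction but by a global contradiction-plus-pigeonhole argument, and this is exactly where the constant $1/3$ that you never exploit is needed: the ``out-of-basin'' upper bound is proved at radius $\alpha/(3T)$ (giving $e^{-\alpha^2/18}$), so that ``$G(\vtheta_k)$ large'' localizes $\vtheta_k$ to within $\alpha/(3T)$ of \emph{some} dominant eigenvalue. If some $\lambda_{m^\star}$ were missed after $|\mathcal{D}|$ rounds, then at every round an unblocked grid point within $q/T$ of $\lambda_{m^\star}$ exists, forcing every $\vtheta_k$ to lie within $\alpha/(3T)$ of one of the remaining $|\mathcal{D}|-1$ eigenvalues; pigeonhole yields two outputs within $2\alpha/(3T)<\alpha/T$ of each other, contradicting the blocking rule $|\vtheta_{k_1}-\vtheta_{k_2}|\ge\alpha/T$. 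With your radius $\alpha/T$ the corresponding bound is $2\alpha/T$, which does not contradict the blocking, so neither your induction nor a pigeonhole patch closes under your stated estimates. To fix the proposal, replace the inductive ``progress at every step'' claim with this global argument and tighten the out-of-basin bound to radius $\alpha/(3T)$.
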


The above theorem guarantees that Properties (1)-(3) are attained as outlined in \cref{sec:intro}. This result shares similarities with \cite[Theorem III.5]{ni2023lowdepth_2}. Since a spectral gap is not assumed, $T$ may not be large enough to distinguish between close dominant eigenvalues, and it is possible for two dominant eigenvalues to fall within the same interval $[\vtheta_k-\alpha/T,\vtheta_k+\alpha/T]$.
Also, some intervals may not include any dominant eigenvalues since $K \geq |\mathcal{D}|$. But \eqref{eqn:distance_bound_regime_1} ensures that the dominant eigenvalues are covered by the union set $\cup_{k}[\vtheta_k-\alpha/T,\vtheta_k+\alpha/T]$.

We briefly describe the strategy of proving the theorem using proof by contradiction as follows: First, assume that there exists $\lambda_{m^\star}$ such that $\lambda_{m^\star}\notin\overline{\mathcal{I}}_{B,|\mathcal{D}|}$ (here $\overline{\mathcal{I}}$ denotes the closure of $\mathcal{I}$). Then one can show that there exists a grid point $\vtheta^\star\notin\mathcal{I}_{B,|\mathcal{D}|}$ such that $|\vtheta^\star-\lambda_{m^\star}|\leq q/T$ using the condition of $q,\alpha$. Thus the value of $G$ at $\vtheta^\star$ has the following lower bound due to \eqref{eqn:approximation_of_G}:
\[
G(\vtheta^\star)\geq p_{\min}\exp\left(-\frac{q^2}{2}\right)+\text{small error}\,.
\]
Then each $\left[\vtheta_k-\frac{\alpha}{3T},\vtheta_k+\frac{\alpha}{3T}\right]$ must cover some dominant eigenvalue $\lambda_m$ since otherwise
\[
G(\vtheta_k)\leq \ptail+\exp\left(-\frac{\alpha^2}{18}\right)+\text{small error}< p_{\min}\exp\left(-\frac{q^2}{2}\right)+\text{small error}\leq G(\vtheta^\star)\,,
\]
which contradicts with the fact that $\vtheta_k$ is the maximal point at $k$-th step. Finally, since $\lambda_{m^\star}$ is not covered by any block intervals, some dominant eigenvalue $\lambda_{m'}$ must be covered by at least two such intervals $\left[\vtheta_k-\frac{\alpha}{3T},\vtheta_k+\frac{\alpha}{3T}\right]$ and $\left[\vtheta_j-\frac{\alpha}{3T},\vtheta_j+\frac{\alpha}{3T}\right]$ ($j\neq k$). However, according to the definition of the block interval, we have $|\vtheta_k-\vtheta_j|\geq \frac{\alpha}{T}$, which implies $\left[\vtheta_k-\frac{\alpha}{3T},\vtheta_k+\frac{\alpha}{3T}\right]\cap \left[\vtheta_j-\frac{\alpha}{3T},\vtheta_j+\frac{\alpha}{3T}\right]=\emptyset$. This contradicts the existence of $\lambda_{m'}$ and concludes the proof.

In \cref{sec:pf}, we account for both random and truncation errors and form a rigorous proof using this idea.

\item Gapped dominant eigenvalues case: In this case, we assume that $T$ is large enough to allow the Gaussian filter to distinguish dominant eigenvalues. We show that Algorithm \ref{alg:main} attains Heisenberg-limited scaling with $T_{\max}=\ptail/(p_{\mathrm{min}}\epsilon)$. Similar results have been attained by previous algorithms, such as MM-QCELS~\cite{Ding2023simultaneous} and \cite[Theorem IV.2]{ni2023lowdepth_2}, and are  referred to as \emph{Short depth} regime (Property (4)). The results are summarized in the following theorem:
\begin{theorem}[$T\gg \Deltam^{-1}$]\label{thm:regime_2} Assume that $p_{\min}>\ptail$ and define \[\zeta=\min\{(p_{\min}-\ptail)/\ptail,1\}\,.\] Given failure probability $\eta>0$, we choose $q<\alpha/3$, $\alpha/q\in\mathbb{N}$ such that
\begin{equation}\label{eqn:condition:para_regime_2}
\sigma=\Omega\left(\log^{1/2}\left(\frac{1}{\zeta\ptail}\right)\right),\quad \alpha=\Omega\left(\sigma\right),\quad q=\mathcal{O}\left(\frac{\zeta\ptail}{\sigma}\right)\,.
\end{equation}
and
\[
T=\Omega\left(\frac{\alpha}{\Deltam}\right),\quad N=\Omega\left(\frac{1}{(\zeta\ptail)^{2}}\log\left(\left(\frac{T}{q}+|\mathcal{D}|\right)\frac{1}{\eta}\right)\right)\,.
\]
There exists
\[
Q=\Theta\left(\exp\left(\Theta(\zeta^{-1})\right)(\sigma\zeta+1)\frac{\ptail}{p_{\min}T}\right)
\]
such that, with probability $1-\eta$, for each $i\in\mathcal{D}$, there exists a unique $1\leq k_i\leq |\mathcal{D}|$ such that
\begin{equation}\label{eqn:distance_bound}
\left|\lambda_i-\vtheta_{k_i}\right|\leq \frac{Q}{T}\,.
\end{equation}
Furthermore, we have $k_i\neq k_j$ for $i\neq j$. In particular, when $p_{\min}>(1+o(1))\ptail$, for $\epsilon=\widetilde{\mathcal{O}}\left(\ptail\Deltam/p_{\min}\right)$,
the following quantum cost is enough to achieve $\max_{m\in\mathcal{D}}|\lambda_m-\vtheta_{k_m}|\leq \epsilon$:
\[
T_{\max}=\widetilde{\Theta}\left(\frac{\ptail}{p_{\min}\epsilon}\right),\quad T_{\mathrm{total}}=\widetilde{\Theta}\left(\frac{1}{\ptail p_{\min}\epsilon}\log\left(\frac{|\mathcal{D}|}{\eta}\right)\right)\,.
\]
\end{theorem}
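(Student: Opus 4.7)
The plan is to follow the skeleton of the contradiction argument used for \cref{thm:regime_1}, but to exploit the gap assumption $T=\Omega(\alpha/\Deltam)$ so that the Gaussian peaks centered at distinct dominant eigenvalues are essentially non-overlapping on the block scale $\alpha/T$. The first step is a uniform concentration bound
\begin{equation*}
|G(\theta_j)-\mathcal{G}(\theta_j)|\leq \epsilon_0 \quad\text{for all } j,
\end{equation*}
which follows from Hoeffding applied to the bounded variables $Z_n\exp(i\theta_j t_n)$, together with a union bound over the $\mathcal{O}(T/q+|\mathcal{D}|)$ points at which $G$ is queried; solving $\epsilon_0=\Theta(\zeta\ptail)$ reproduces the stated sample count $N$. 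The truncation $\sigma$ is chosen large enough that the Gaussian tail outside $[-\sigma T,\sigma T]$ contributes at most this same order, so that $\mathcal{G}$ may be treated as $\sum_m p_m F_T(\theta-\lambda_m)$ with $F_T(x)\approx\exp(-T^2x^2/2)$ up to an additive error comparable with $\zeta\ptail$.

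The core of the argument is a peak-localization step: at every iteration the argmax of $G$ outside the current block set must land within $Q/T$ of some dominant eigenvalue. Fix $i\in\mathcal{D}$ and let $\vtheta^\star$ be the grid point closest to $\lambda_i$, so $|\vtheta^\star-\lambda_i|\leq q/T$. Because $T\Deltam=\Omega(\alpha)=\Omega(\sigma)$, each other dominant eigenvalue $\lambda_j$ contributes at most $p_j\exp(-T^2\Deltam^2/2)=\mathcal{O}(\ptail)$ at $\vtheta^\star$, so
\begin{equation*}
G(\vtheta^\star)\geq p_{\min}\exp(-q^2/2)-\mathcal{O}(\ptail)-\epsilon_0.
\end{equation*}
Conversely, any candidate $\vtheta$ lying at distance more than some $r$ from every dominant eigenvalue satisfies
\begin{equation*}
G(\vtheta)\leq \exp(-T^2 r^2/2)+\ptail+\epsilon_0,
\end{equation*}
and the two bounds intersect at $r$ of order $T^{-1}\sqrt{\log(p_{\min}/\ptail)}$, multiplied by the $\zeta$-dependent constants that assemble into the stated $Q$. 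Therefore the argmax $\vtheta_k$ picked at step $k$ must lie within $Q/T$ of some $\lambda_{m_k}$.

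Uniqueness and the complexity claim then fall out of the block mechanism. Since $\alpha=\Omega(\sigma)$ and $Q\ll\alpha$, the interval $[\vtheta_k-\alpha/T,\vtheta_k+\alpha/T]$ swallows the whole effective support of the Gaussian centered at $\lambda_{m_k}$ but, because $T=\Omega(\alpha/\Deltam)$, it does not touch any other dominant eigenvalue; no subsequent iteration can rediscover $\lambda_{m_k}$, and the induced map $i\mapsto k_i$ is injective. Substituting $\epsilon=Q/T$ in the regime $p_{\min}=(1+o(1))\ptail$ gives $T=\widetilde{\Theta}(\ptail/(p_{\min}\epsilon))$, so $T_{\max}=\sigma T$ and $T_{\mathrm{total}}=\mathcal{O}(NT)$ match the claim. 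The main obstacle, I expect, is converting the qualitative ``peaks are sharp'' picture into the precise expression for $Q$ with the correct $\ptail/p_{\min}$ and $\zeta$ dependence: one decomposes $\mathcal{G}(\theta)=p_iF_T(\theta-\lambda_i)+R(\theta)$ on a neighborhood of $\lambda_i$ and argues that the perturbation $R$ shifts the local maximum by at most $\|R\|_\infty/(p_i|F_T''(0)|)$, while the Gaussian decay rate contributes the residual $\sqrt{\log(p_{\min}/\ptail)}$ factor. Everything else is a bookkeeping of concentration and truncation errors that mirrors the proof of \cref{thm:regime_1}.
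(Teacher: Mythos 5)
Your outline reproduces the setup correctly (concentration, truncation, the grid point $\vtheta^\star$ near $\lambda_i$, the blocking/injectivity argument), but the core quantitative step does not deliver the claimed $Q$. Comparing $G(\vtheta^\star)\geq p_{\min}e^{-q^2/2}-\mathcal{O}(\ptail)-\epsilon_0$ against $G(\vtheta)\leq p^\star e^{-T^2r^2/2}+\ptail+\epsilon_0$ and solving for the crossover gives $1-e^{-T^2r^2/2}=\mathcal{O}(\ptail/p_{\min})$, i.e.\ $Tr=\mathcal{O}(\sqrt{\ptail/p_{\min}})$. This is exactly the paper's intermediate ``rough bound'' \eqref{eqn:rough_bound_first}; it scales as the \emph{square root} of $\ptail/p_{\min}$, whereas the theorem's $Q$ is \emph{linear} in $\ptail/p_{\min}$ (up to the $\exp(\Theta(\zeta^{-1}))(\sigma\zeta+1)$ factors). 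Your parenthetical claim that the crossover radius is $T^{-1}\sqrt{\log(p_{\min}/\ptail)}$ times constants that ``assemble into the stated $Q$'' is not correct---neither that expression nor the square-root bound matches $Q$. Your closing remark does identify that a perturbation-of-maximum argument is needed, but the bound you propose, $\|R\|_\infty/(p_i|F_T''(0)|)$, is not the right one: with only a sup-norm bound on the perturbation $R$, the maximizer of $p_iF_T(\cdot-\lambda_i)+R$ can move by $\Theta\left(\sqrt{\|R\|_\infty/(p_iT^2)}\right)$, which again gives only the square-root rate.

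The missing ingredient is control of the \emph{differences} $R(\vtheta_k)-R(\lambda^\star_k)$, i.e.\ a Lipschitz-type bound on the perturbation rather than a sup-norm bound. The paper gets this from two sources: the deterministic $\ptail T$-Lipschitz property of $G_{\mathrm{tail}}$, and the probabilistic difference bound \eqref{eqn:random_error_2} of \cref{lem:truncation_error}, $|E(\theta)-E(\theta')|\leq\sigma T\delta|\theta-\theta'|+\delta^2$, which is a separate concentration statement that your Hoeffding-plus-union-bound step does not provide. With these in hand, the paper compares $G^2(\vtheta_k)$ with $G^2(\lambda^\star_k)$, obtains $(p_{\min}-\zeta\ptail/5)(1-F_k(\vtheta_k))\leq 2(\sigma\zeta+1)T\ptail|\vtheta_k-\lambda^\star_k|+\mathcal{O}(\ptail^2/p_{\min})$, and then \emph{bootstraps}: the rough bound $T|\vtheta_k-\lambda^\star_k|=\mathcal{O}(\zeta^{-1/2})$ is used to lower-bound $1-F_k(\vtheta_k)\geq\tfrac{1}{2}\exp(-\Theta(\zeta^{-1}))T^2(\vtheta_k-\lambda^\star_k)^2$, turning the inequality into a quadratic in $T|\vtheta_k-\lambda^\star_k|$ whose positive root is $\mathcal{O}(\exp(\Theta(\zeta^{-1}))(\sigma\zeta+1)\ptail/p_{\min})$. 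Without the difference bound and this two-stage refinement, your argument proves a weaker statement with $Q$ of order $\sqrt{\ptail/p_{\min}}$, which would degrade the final depth to $T_{\max}=\widetilde{\Theta}(\sqrt{\ptail/p_{\min}}\,\epsilon^{-1})$ rather than the claimed $\widetilde{\Theta}(\ptail/(p_{\min}\epsilon))$.
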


In contrast to the general case, in this scenario,
\[
\exp\left(-T^2(\lambda_{m_1}-\lambda_{m_2})^2/2\right)\ll 1, \quad m_1,m_2\in\mathcal{D}.
\]
This condition ensures that the Gaussian filter can effectively distinguish between different dominant eigenvalues, establishing a one-to-one correspondence between dominant eigenvalues $\lambda_m$ and their corresponding maximal points $\vtheta_{k_m}$.

To establish the short circuit depth result in \eqref{eqn:distance_bound}, we begin by noting that, neglecting the influence of other eigenvalues and noise, we must have $|\lambda_m-\vtheta_{k_m}|\leq q/T$ due to the decreasing property of the function $F(x)$. When accounting for the impact of other eigenvalues, the following observations hold: 1. The influence of other dominant eigenvalues is negligible when determining $\vtheta_k$ since $T=\widetilde{\Omega}\left(\Deltam^{-1}\right)$; 2. The effect of the tail eigenvalues diminishes as $\ptail$ approaches zero since $\left|\sum_{m\in\mathcal{D}^c}p_m\exp(-(\theta-\lambda_m)^2T^2/2)\right|\leq \ptail$.
By leveraging these observations and controlling the impact of random noise, we establish the validity of \eqref{eqn:distance_bound} in \cref{sec:pf}.

\item The case where a gap is known between dominant eigenvalues and the tail: In this case, we assume that $T\gg \Delta^{-1}$. The Gaussian filter thus becomes sufficiently sharp to distinguish between dominant eigenvalues and tail eigenvalues. In this scenario, we demonstrate that \cref{alg:main} can achieve Heisenberg-limited scaling, with $T_{\max}\epsilon$ arbitrarily close to zero. Furthermore, \cref{alg:main} is capable of achieving $\epsilon$ accuracy with $T_{\max}=\mathcal{O}(\log(1/\epsilon))$ and $T_{\mathrm{total}}=\mathcal{O}(1/\epsilon^2)$. Since the maximum running time $T_{\max}$ depends only logarithmically on $\epsilon$, we refer to this regime as the \emph{``Constant'' depth} regime (Property (5)). It should be noted that similar results have been achieved previously in the task of ground-state energy estimation algorithms, such as QCELS~\cite{DingLin2023,ding2023robust}, and \cite{wfz22}. Here, we extend this result to the case of multiple eigenvalue estimation.

\begin{theorem}[$T\gg \Delta^{-1}$]\label{thm:regime_3} Assume that $p_{\min}>\ptail$. Choose all parameters such that they satisfy the condition of \cref{thm:regime_1}. Furthermore, given any $0<\zeta<p_{\min}/4$ and failure probability $\eta>0$, we choose
\[
\sigma=\Omega\left(\log^{1/2}\left(\frac{1}{\zeta}\right)\right), \quad \alpha=\Omega(\sigma),\quad q=\mathcal{O}\left(\frac{\zeta}{\sigma}\right)\,,
\]
and
\[
T=\Omega\left(\frac{\alpha}{\Delta}\right),\quad N=\Omega\left(\frac{1}{\zeta^{2}}\log\left(\left(\frac{T}{q}+|\mathcal{D}|\right)\frac{1}{\eta}\right)\right)\,.
\]
Let $Q=\mathcal{O}\left(\frac{\zeta\sigma}{p_{\min}}\right)$. Then, with high probability, for each $i\in\mathcal{D}$, there exist $1\leq k_i\leq |\mathcal{D}|$ such that
\begin{equation}\label{eqn:distance_bound_regime_3}
\left|\lambda_i-\vtheta_{k_i}\right|\leq \frac{Q}{T}\,.
\end{equation}
Furthermore, it holds that $k_i\neq k_j$ for $i\neq j$. In particular, for $\epsilon=\widetilde{\mathcal{O}}\left(\Delta/p_{\min}\right)$ and $\delta=\Omega(p_{\min}\epsilon/\Delta)$, the following quantum cost is enough to achieve $\max_{m\in\mathcal{D}}|\lambda_m-\vtheta_{k_m}|\leq \epsilon$:
\[
T_{\max}=\Theta\left(\frac{\delta}{p_{\min}\epsilon}\log\left(\frac{1}{\delta}\right)\right),\quad T_{\mathrm{total}}=\widetilde{\Theta}\left(\frac{1}{p_{\min}\delta\epsilon}\log\left(\frac{|\mathcal{D}|}{\eta}\right)\right)\,.
\]
\end{theorem}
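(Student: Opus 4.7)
The plan is to bootstrap from \cref{thm:regime_1} and exploit the stronger spectral gap $\Delta$ between dominant and \emph{all} other eigenvalues to sharpen the error bound from $\alpha/T$ down to $Q/T$. Since every hypothesis of \cref{thm:regime_1} is satisfied by the present parameter choices, that theorem already produces indices $k_i$ with $|\lambda_i - \vtheta_{k_i}| \le \alpha/T$ with probability at least $1-\eta$. The new assumption $T = \Omega(\alpha/\Delta)$ forces $\alpha/T \ll \Delta$, so each ball $[\vtheta_k - \alpha/T,\, \vtheta_k + \alpha/T]$ contains at most one eigenvalue (dominant or tail). In particular no tail eigenvalue can share such an interval with a dominant one, and the map $i \mapsto k_i$ must be injective; this already upgrades the conclusion of \cref{thm:regime_1} to a one-to-one correspondence.

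To refine the error I would localize $\mathcal{G}$ near each $\lambda_i$ by writing $\mathcal{G}(\theta) = p_i F(\theta - \lambda_i) + R_i(\theta)$ with $R_i(\theta) := \sum_{j \neq i} p_j F(\theta - \lambda_j)$, and bound $R_i$ on the slab $|\theta - \lambda_i| \le \alpha/T$. There every other $\lambda_j$ lies at distance at least $\Delta/2$ from $\theta$, so the Gaussian decay of $F$, combined with $T\Delta \ge \alpha = \Omega(\sigma)$ and $\sigma^2 = \Omega(\log(1/\zeta))$ and a careful treatment of the truncation correction inherent in \eqref{eqn:a_T}, forces $|R_i(\theta)| \le \mathcal{O}(\zeta)$. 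A Hoeffding-type concentration inequality applied uniformly over the grid $\{\theta_j\}_{j=0}^J$ then yields $|G(\theta_j) - \mathcal{G}(\theta_j)| \le \mathcal{O}(\zeta)$ with probability at least $1-\eta/2$ whenever $N = \Omega(\zeta^{-2} \log((T/q + |\mathcal{D}|)/\eta))$.

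The argmax comparison uses $\theta^\star$, the grid point closest to $\lambda_i$, which satisfies $|\theta^\star - \lambda_i| \le q/(2T)$; by injectivity $\theta^\star \notin \mathcal{I}_{B,k_i}$, so $\theta^\star$ is a valid candidate at step $k_i$ and hence $G(\vtheta_{k_i}) \ge G(\theta^\star)$. Combining this with the decomposition and $q = \mathcal{O}(\zeta/\sigma)$ gives $p_i F(\vtheta_{k_i} - \lambda_i) \ge p_i - \mathcal{O}(\zeta)$; inverting via the near-Gaussian form of $F$ produces $|\vtheta_{k_i} - \lambda_i| \le Q/T$ with some $Q$. To obtain the sharp linear-in-$\zeta$ constant $Q = \mathcal{O}(\zeta \sigma / p_{\min})$, rather than the weaker $\sqrt{\zeta/p_{\min}}$ one gets from a purely zeroth-order comparison, one should exploit the first-order optimality relation $p_i T^2 (\vtheta_{k_i} - \lambda_i) F(\vtheta_{k_i} - \lambda_i) = -\sum_{j \neq i} p_j T^2 (\vtheta_{k_i} - \lambda_j) F(\vtheta_{k_i} - \lambda_j)$ up to $\mathcal{O}(\zeta)$ discretization and statistical error, and then bound the right-hand side using the strong decay of $F$ at distances $\ge \Delta/2$. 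This tight bookkeeping of the truncation corrections together with the propagation of statistical noise through the derivative identity is the main technical obstacle.

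Finally, the complexity follows by setting $Q/T = \epsilon$, which gives $T = \Theta(\zeta \sigma / (p_{\min} \epsilon))$; identifying $\delta := \zeta$ and absorbing $\sigma^2 = \Theta(\log(1/\delta))$ into a logarithmic factor yields $T_{\max} = \sigma T = \Theta((\delta/(p_{\min} \epsilon)) \log(1/\delta))$. Combining this with $N = \widetilde{\Theta}(\delta^{-2})$ and the heuristic $T_{\mathrm{total}} \sim N \cdot T$ produces $T_{\mathrm{total}} = \widetilde{\Theta}((1/(p_{\min} \delta \epsilon)) \log(|\mathcal{D}|/\eta))$. The admissibility constraint $\delta = \Omega(p_{\min} \epsilon/\Delta)$ simply reflects the structural requirement $T = \zeta\sigma/(p_{\min}\epsilon) \ge \alpha/\Delta = \Omega(\sigma/\Delta)$, i.e., the $Q/T$-refinement cannot outperform the $\alpha/T$ baseline dictated by the gap.
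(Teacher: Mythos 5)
Your overall skeleton matches the paper's: invoke \cref{thm:regime_1} to get the coarse localization $|\lambda_i-\vtheta_{k_i}|\le\alpha/T$ and injectivity of $i\mapsto k_i$, use $T=\Omega(\alpha/\Delta)$ to make \emph{both} the other dominant modes and the tail modes negligible (the paper bounds $G_{\rm dom}$ and $G_{\rm tail}$ at $\lambda_k^\star,\vtheta_k^\star,\vtheta_k$ by $\mathcal{O}(\zeta^2)$ and $\mathcal{O}(\zeta^2\ptail)$ respectively), compare the discrete argmax against the grid point nearest $\lambda_i$, and then upgrade the zeroth-order bound to a first-order one to get $Q$ linear in $\zeta$ rather than $\sqrt{\zeta/p_{\min}}$. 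Where you diverge is in how that last upgrade is executed: the paper never writes a stationarity condition; instead it squares the modulus, expands $G(\vtheta_k)^2$ and $G(\lambda_k^\star)^2$ around $p_k^\star F_k$, uses $G(\vtheta_k)^2\ge G(\vtheta_k^\star)^2\ge G(\lambda_k^\star)^2-\mathcal{O}(\zeta^2)$, and arrives at the inequality $(p_{\min}-2\zeta)(1-F_k(\vtheta_k))\le \mathcal{O}(\zeta)\min\{\sigma T|\vtheta_k-\lambda_k^\star|,1\}+\mathcal{O}(\zeta^2/p_{\min})$, which it then solves as a quadratic in $T|\vtheta_k-\lambda_k^\star|$. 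Your derivative-identity route can be made to give the same answer, but two points need repair before it is a proof. First, the first-order optimality relation cannot hold as an equality: $\vtheta_{k_i}$ maximizes the noisy $G$ over a grid of spacing $q/T$, so all you get is the one-sided comparison $G(\vtheta_{k_i})\ge G(\theta^\star)$, and extracting a derivative-type statement from it costs a discretization term of order $qTp_i$ plus a statistical term of order $\sigma T\zeta$ --- not the dimensionless $\mathcal{O}(\zeta)$ you quote (an $\mathcal{O}(\zeta)$ slack in that identity would yield $|\vtheta_{k_i}-\lambda_i|=\mathcal{O}(\zeta/(p_iT^2))$, which is too strong). Second, and relatedly, the ``propagation of statistical noise through the derivative identity'' is exactly the increment bound $|E(\theta)-E(\theta')|\le\sigma T\zeta|\theta-\theta'|+\zeta^2$ of \cref{lem:truncation_error}, Eq.~\eqref{eqn:random_error_2}; this is the ingredient that produces the $\sigma$ in $Q=\mathcal{O}(\zeta\sigma/p_{\min})$, and your sketch needs it stated and proved (a pointwise Hoeffding bound on $E$ alone does not control the derivative of the noise, whose raw magnitude is $\mathcal{O}(\sigma T)$). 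Your complexity accounting at the end (identifying $\delta$ with $\zeta$, $T_{\max}=\sigma T$, $T_{\rm total}\sim NT$, and the admissibility constraint $\delta=\Omega(p_{\min}\epsilon/\Delta)$) agrees with the paper.
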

According to the above theorem, we can increase the number of samples $N$ to decrease $T_{\max}\epsilon$ to an arbitrarily small value. However, it is crucial to acknowledge the tradeoff, as an increase in the number of samples $N$ also amplifies the total running time. In the extreme scenario where $N^{-1/2}=\widetilde{\Theta}(\epsilon)$, \cref{alg:main} achieves $\epsilon$ accuracy with $T_{\max}=\Theta\left(\log\left(1/\epsilon\right)\right)$.

The proof strategy employed in \cref{thm:regime_3} closely mirrors that of \cref{thm:regime_2}. The key difference lies in the influence of tail eigenvalues. As $T=\Omega(\alpha/\Delta)$, we observe that \[\left|\sum_{m\in\mathcal{D}^c}p_m\exp(-(\vtheta-\lambda_m)^2T^2/2)\right|\leq \exp(-\alpha^2/2)=\mathcal{O}(\zeta^2)\,.\]
This implies that the impact of tail eigenvalues can be arbitrarily small. By combining this observation with the proof strategy applied in \cref{thm:regime_2}, we can establish the validity of \eqref{eqn:distance_bound_regime_3}.

The algorithm's ability to achieve a ``Constant'' depth comes from the rapid decay of the filter function $F(x)$ as $|x|$ increases. A similar concept is employed in QCELS~\cite{DingLin2023,ding2023robust}, where solving the optimization problem is nearly equivalent to a filtering and searching process using the Gaussian filter function. It is also important to note that the choice of the Gaussian filter function might not be exclusive to achieve the ``Constant" depth property. For instance, in the algorithm presented in~\cite{wfz22}, the filtering and searching process is applied using the Gaussian derivative function to approximately locate the 'zero point' of the filtered signal, which corresponds to the \emph{ground state} energy. Exploring the possibility of extending other filter functions to MEE and achieving similar results would be an interesting avenue for further investigation.

\end{itemize}

\section{Numerical experiments}\label{sec:num}

In this section, we demonstrate the efficiency of QMEGS numerically by comparing it with previously established quantum phase estimation algorithms: MM-QCELS~\cite{Ding2023simultaneous}, QPE (textbook version~\cite{NielsenChuang2000}), and ESPRIT~\cite{Stroeks_2022}. A brief summary of these three methods is provided in Appendix \ref{sec:summary_numerical_method}. We share the code on Github (\url{https://github.com/zhiyanding/phase_estimation_methods}).

We consider three models: 1. Toy Hamiltonian with almost zero dominant spectral gap between dominant eigenvalues (\cref{sec:artificial}); 2. TFIM model (\cref{sec:Ising}); 3. Hubbard model (\cref{sec:Hubbard}). In all cases, we normalize the Hamiltonian spectrum so that the eigenvalues lie within the range of $[-\pi/4,\pi/4]$ for our numerical experiments. Specifically, we use the normalized Hamiltonian in the experiment.
\begin{equation}\label{eqn:normalize_H}
\widetilde{H}=\frac{\pi H}{4\|H\|_2}\,.
\end{equation}
In our test, we construct an initial state $\ket{\psi}$ with $p_1=p_2=0.4$. Therefore, we let $\lambda_1,\lambda_2$ be the dominant eigenvalues and set $\mathcal{D}=\{1,2\}$. In addition, the dominant spectral gap is {$\Deltam=\lambda_2-\lambda_1$}. To test the four algorithms, we set $T=100\times 2^n$ with $n=1,2,\cdots,7$. For QMEGS (\cref{alg:main}), we choose $N=500$, $K=2$, $\alpha=5$, $\sigma=1$, $q=0.05$.
For MM-QCELS~\cite[Algorithm 2]{Ding2023simultaneous}, the parameters are set to $K=2,T_0=100,N_0=10^3,N_{j\geq 1}=500$ and $\sigma=1$. For ESPRIT (see \cref{sec:summary_numerical_method}), we set $N=\lfloor T\rfloor$ and $K=2$. {For QPE, we consider the specific version designed for estimating the ground state energy $\lambda_1$, which is a variation of the version proposed in ~\cite[Chapter 5.2]{NielsenChuang2000}. Specifically, we sample its distribution $N_{\mathrm{QPE}}=\lceil 6/p_1\rceil$ times, and take the minimal of the measures as the approximation to $\lambda_{1}$ (see \cref{sec:summary_numerical_method}). The error QPE is defined as the error for estimating this single eigenvalue.}

We then use QMEGS (\cref{alg:main}), MM-QCELS, and ESPRIT to estimate the two dominant eigenvalues and measure the max-min error:
\begin{equation}\label{eqn:max_error}
\mathrm{error}=\max_{m\in \mathcal{D}}\min_{\sigma\in\{1,2,\cdots,K\}^{\otimes K}}|\vtheta^*_{\sigma_i}-\lambda_m|\,.
\end{equation}
In our case, $D=\{1,2\},K=2$. This testing criterion gives QPE with an advantage. Nonetheless, we illustrate that {the performance of QMEGS, MM-QCELS, and ESPRIT can all to surpass that of QPE.}

\subsection{Almost zero dominant spectral gap}\label{sec:artificial}
In this test, we randomly generate a Hamiltonian $H$
\[
H=\sum^M_{m=1}\lambda_m\ket{\psi_m}\bra{\psi_m}\,,
\]
with dimension $M=20$, $\|H\|=1$, and $\lambda_1-\lambda_0=10^{-3}$. Thus, we have $\Deltam=\lambda_1-\lambda_0=10^{-3}$. We then randomly generate an initial state $\ket{\psi}$ such that $p_1=p_2=0.4$.

We apply QMEGS (\cref{alg:main}), MM-QCELS, ESPRIT, and QPE to estimate the dominant eigenvalues ($\lambda_1,\lambda_2$) of the normalized Hamiltonian $\widetilde{H}$ according to \eqref{eqn:normalize_H}.  A comparison of the results is shown in \cref{fig:artificial}.

We observe that the QPE error consistently decreases with increasing $T_{\max}$. However, with the same $T_{\max}$, the error for QPE is consistently higher than that of the other three methods. For QMEGS (\cref{alg:main}) and ESPRIT, errors do not change much with small values of $T_{\max}$ and gradually decrease when $T_{\max}$ is large enough. This behavior aligns with expectations, as for small $T_{\max}$, QMEGS and ESPRIT struggle to distinguish between two dominant eigenvalues. When $T_{\max}$ becomes sufficiently large, these algorithms automatically differentiate between $\lambda_1$ and $\lambda_2$, leading to more accurate estimates. In contrast, the performance of MM-QCELS is different. Despite a large $T_{\max}$, MM-QCELS struggles to differentiate between two dominant eigenvalues, resulting in the error persisting around $\Delta_m=10^{-3}$. Additionally, when comparing the total evolution time ($T_{\mathrm{total}}$), QMEGS (\cref{alg:main}) is shown to be more time efficient than both QPE and ESPRIT.

\begin{figure}[htbp]
     \subfloat{
         \centering
         \includegraphics[width=0.5\textwidth]{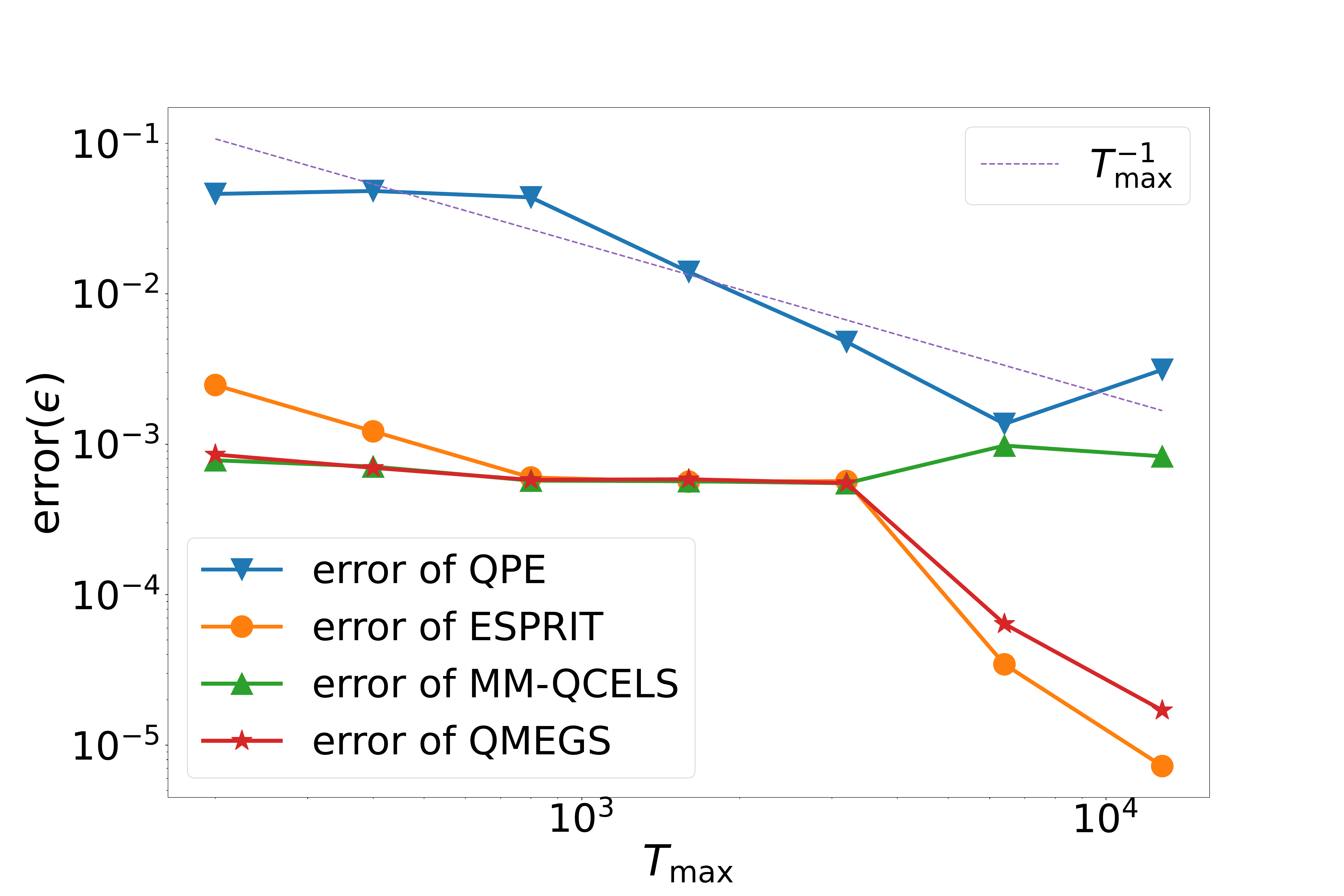}
     }
     \subfloat{
         \centering
         \includegraphics[width=0.5\textwidth]{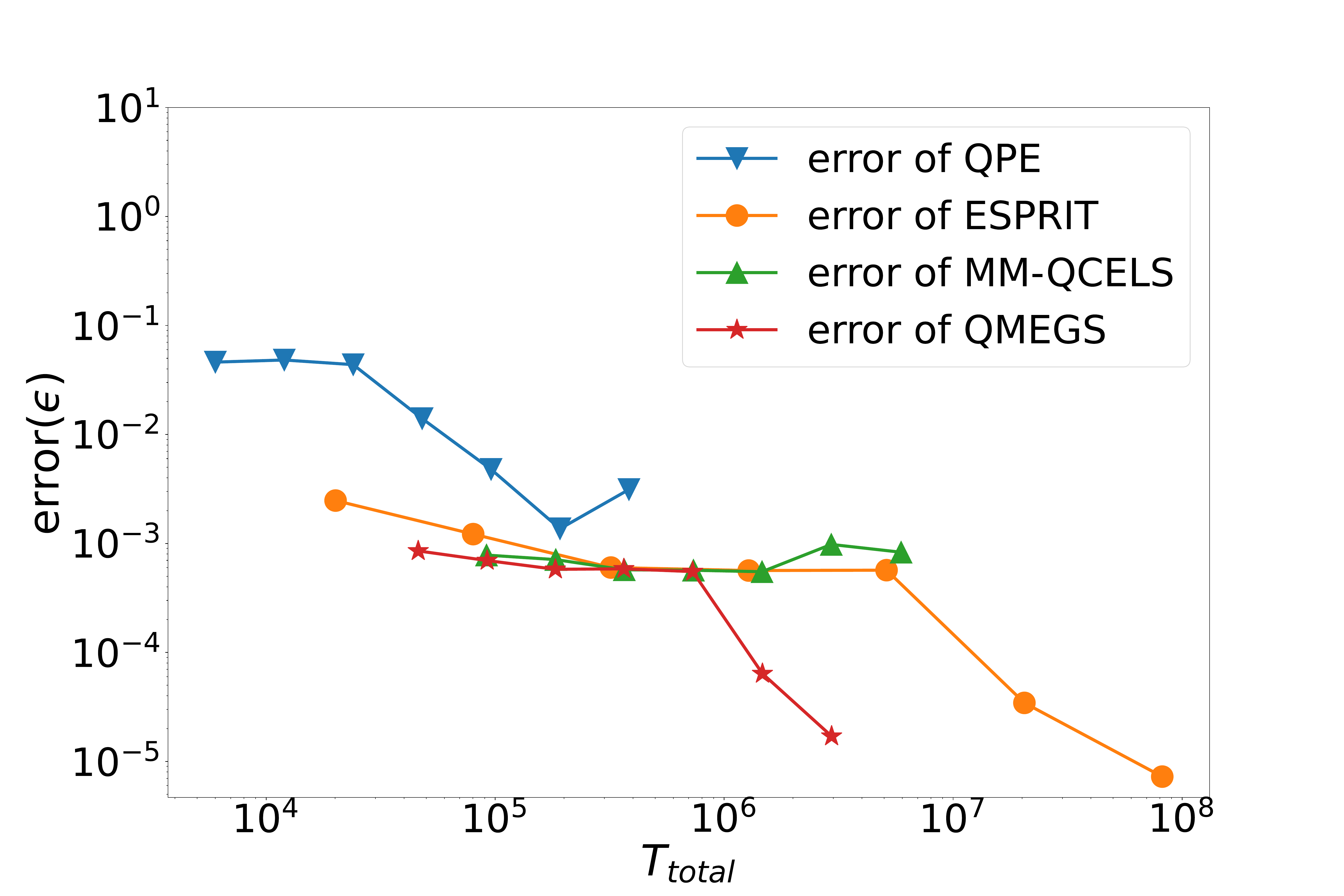}
     }
     \caption{
     \label{fig:artificial} QMEGS (\cref{alg:main}) vs. MM-QCELS vs. ESPRIT vs. QPE in almost zero dominant spectral gap Hamiltonian with $p_1=p_2=0.4$. We highlight that when the spectral gap is small ($\Delta_m=10^{-3}$), MM-QCELS struggles to differentiate between two prominent eigenvalues even when $T$ is large. In contrast, QMEGS and ESPRIT exhibit the ability to automatically distinguish between $\lambda_1$ and $\lambda_2$, achieving high accuracy as $T$ increases.}
\end{figure}

\subsection{Ising model}\label{sec:Ising}
Consider the one-dimensional transverse field Ising model (TFIM) model defined on $L$ sites with periodic boundary conditions:
\begin{equation}\label{eqn:H_Ising}
H=-\left(\sum^{L-1}_{i=1} Z_{i}Z_{i+1}+Z_{L}Z_1\right) -g\sum^L_{i=1} X_i
\end{equation}
where $g$ is the coupling coefficient, $Z_i,X_i$ are Pauli operators for the $i$-th site and the dimension of $H$ is $2^L$. We choose $L=8,g=4$ and apply four algorithms to estimate the dominant eigenvalues ($\lambda_1,\lambda_2$) of the normalized Hamiltonian $\widetilde{H}$ according to \eqref{eqn:normalize_H}. We note that in this case, $\Deltam=\lambda_1-\lambda_0\approx 0.2$. A comparison of the results is shown in \cref{fig:Ising}.

We observe that the errors of QMEGS, MM-QCELS, and QPE exhibit a proportionality to the inverse of the maximal evolution time ($T_{\max}$). In particular, the constant factor $\delta=T\epsilon$ for QMEGS and MM-QCELS is considerably smaller than that of QPE. As illustrated in \cref{fig:Ising}, QMEGS and MM-QCELS significantly reduce the maximal evolution time by two orders of magnitude. {In contrast to the other three methods, the error of ESPRIT decreases with $T^{-1.5}_{\max}$\footnote{{A very recent work~\cite{ding2024esprit} rigorously proves this based on a highly nontrivial matrix perturbation argument. Specifically, when $T_{\max} \gg \Delta_m $, the authors show that the error of ESPRIT scales as $ T_{\max}^{-1.5} $.}}, enabling it to achieve a smaller error than the other methods when $T_{\max}$ is sufficiently large.} However, in terms of the total evolution time $T_{\mathrm{total}}$, ESPRIT incurs a cost that is one order of magnitude higher than QMEGS.

\begin{figure}[htbp]
     \subfloat{
         \centering
         \includegraphics[width=0.48\textwidth]{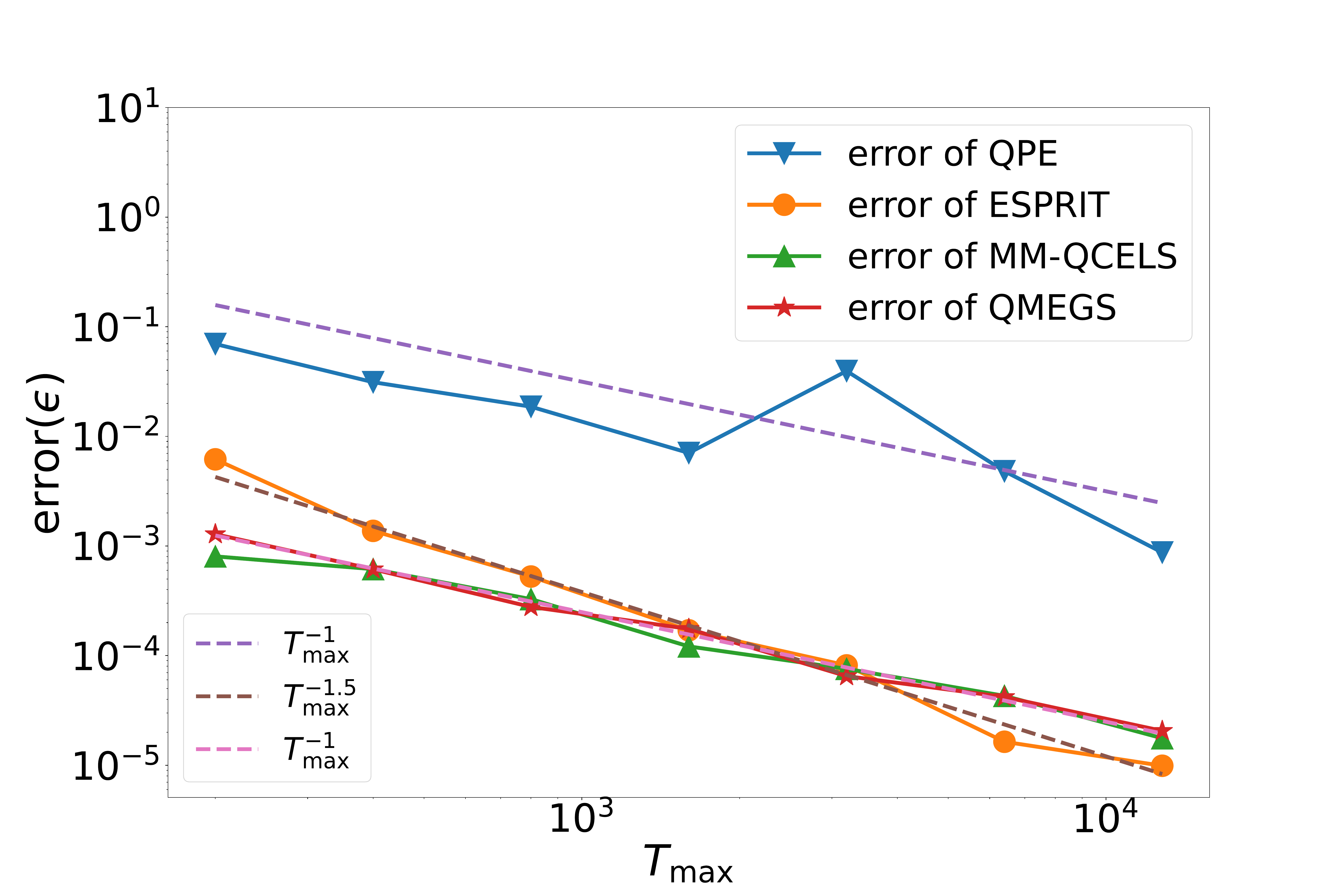}
     }
     \subfloat{
         \centering
         \includegraphics[width=0.48\textwidth]{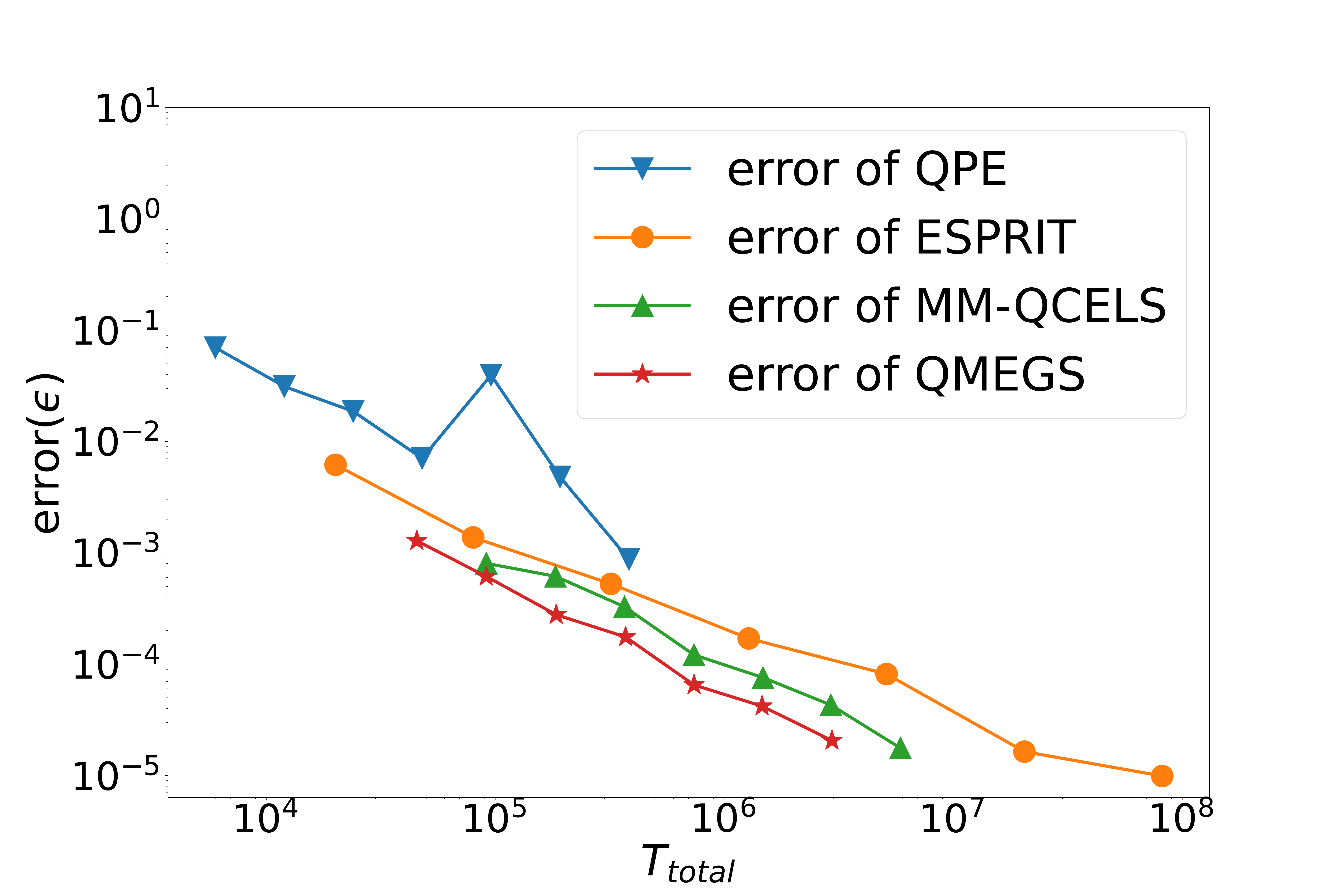}
     }
     \caption{
     \label{fig:Ising} QMEGS (\cref{alg:main}) vs. MM-QCELS vs. ESPRIT vs. QPE in TFIM model with 8 sites with $p_1=p_2=0.4$. Left: Depth ($T_{\max}$); Right: Cost ($T_{\mathrm{total}}$). The errors of QPE, QMEGS, and MM-QCELS methods exhibit a linear scaling with $1/T_{\max}$, where the constant factor $\delta=T\epsilon$ for QMEGS and MM-QCELS is notably smaller than that of QPE. On the other hand, ESPRIT's error scales as $T_{\max}^{-1.5}$. In terms of the total evolution time $T_{\mathrm{total}}$, QMEGS stands out with the smallest $T_{\mathrm{total}}$ compared to other methods and is shown to be one order of magnitude more cost-effective than ESPRIT.}
\end{figure}

\subsection{Hubbard model}\label{sec:Hubbard}
Consider the one-dimensional Hubbard model defined on $L$ spinful sites with open boundary conditions:
\[
H=-t\sum^{L-1}_{j=1}\sum_{\sigma\in\{\uparrow,\downarrow\}}c^\dagger_{j,\sigma}c_{j+1,\sigma}+U\sum^L_{j=1}\left(n_{j,\uparrow}-\frac{1}{2}\right)\left(n_{j,\downarrow}-\frac{1}{2}\right).
\]
Here $c_{j,\sigma}(c^\dagger_{j,\sigma})$ denotes the fermionic annihilation (creation) operator on the site $j$ with spin $\sigma$, $n_{j,\sigma}=c^\dagger_{j,\sigma}c_{j,\sigma}$ denotes the number operator for $\sigma\in \{\uparrow,\downarrow\}$, and $t,U\in \R$ are parameters. We choose $L=4,8$, $t=1$, $U=10$ and test four algorithms to estimate $\lambda_1,\lambda_0$. We note that the dominant spectral gap $\Deltam=\lambda_1-\lambda_0\approx 0.02$.

The numerical results are presented in \cref{fig:Hubbard_8}. Similar to the TFIM model, the maximal evolution time of QMEGS and MM-QCELS is nearly two orders of magnitude smaller than that of QPE, while the error of ESPRIT scales as $T_{\max}^{-1.5}$. Additionally, the overall computational cost of QMEGS is lower than that of other methods and is almost one order of magnitude lower than the cost of ESPRIT.

\begin{figure}[htbp]
     \centering
     \subfloat{
         \centering
         \includegraphics[width=0.48\textwidth]{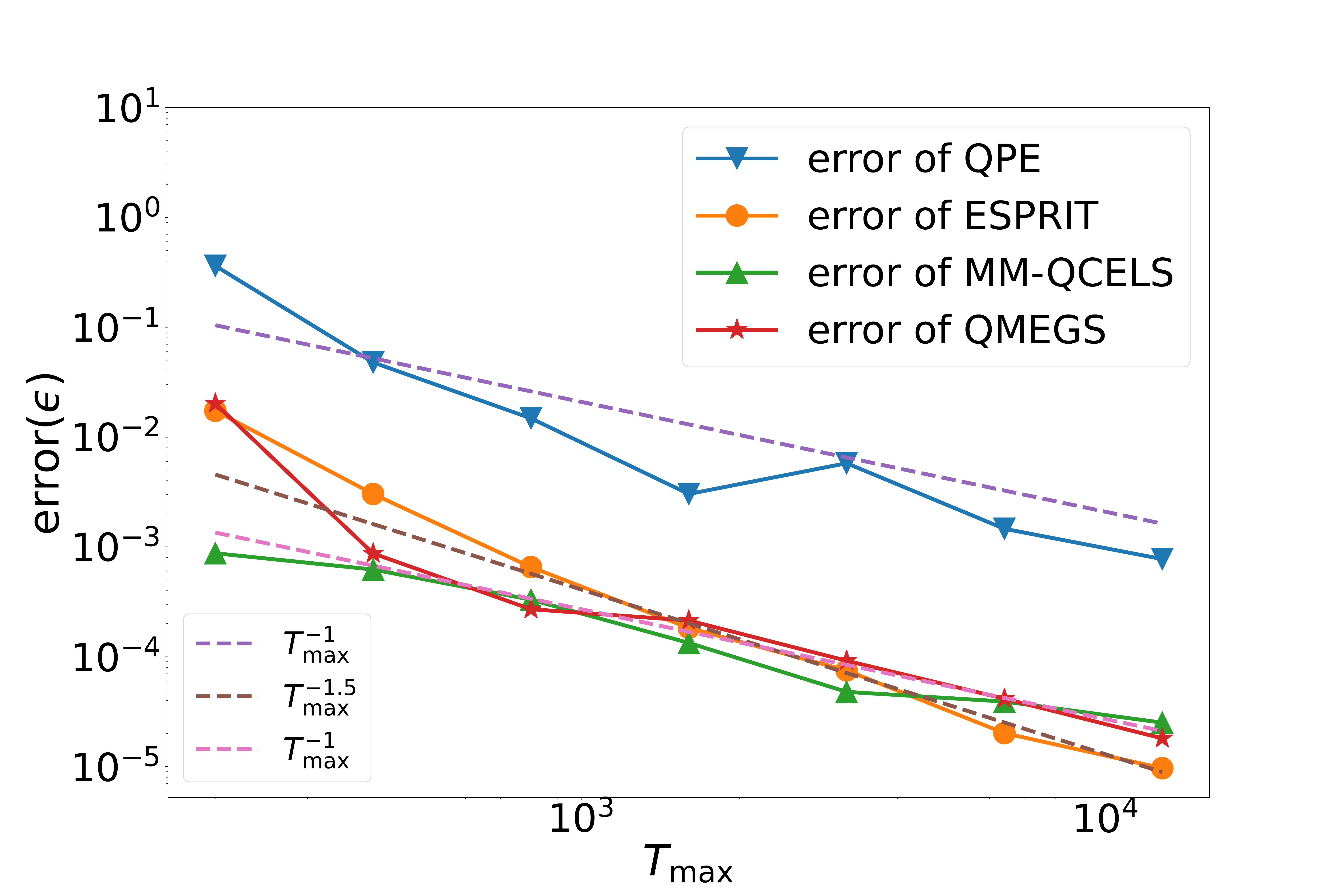}
     }
     \hfill
     \subfloat{
         \centering
         \includegraphics[width=0.48\textwidth]{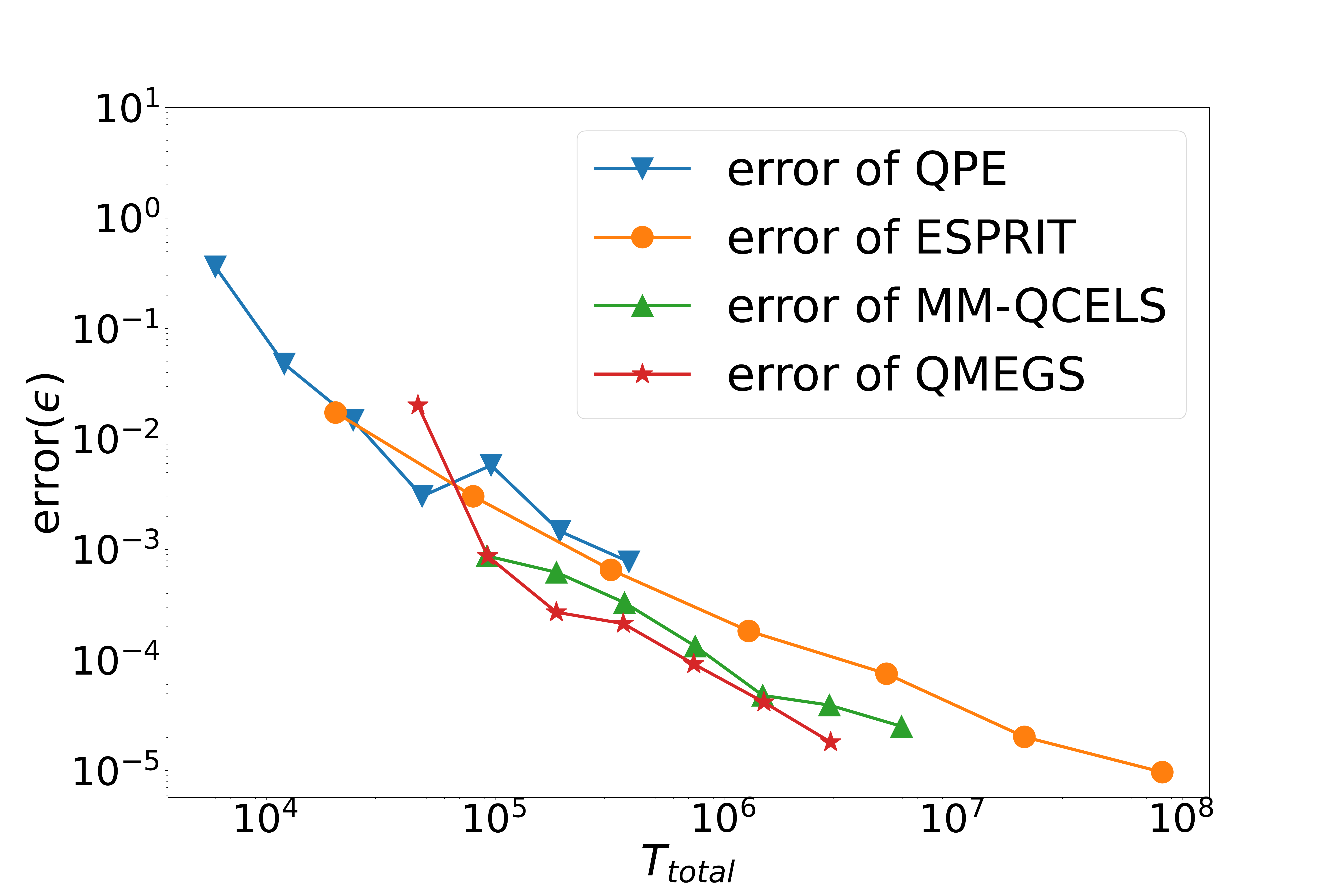}
     }
     \caption{
     \label{fig:Hubbard_8}  QMEGS (\cref{alg:main}) vs. MM-QCELS vs. ESPRIT vs. QPE in TFIM model with 8 sites with $p_1=p_2=0.4$. Left: Depth ($T_{\max}$); Right: Cost ($T_{\mathrm{total}}$). QMEGS and MM-QCELS methods exhibit significantly smaller $T_{\max}$ values compared to QPE. Moreover, QMEGS achieves the smallest total evolution time and is one order of magnitude more cost-effective than ESPRIT.}
\end{figure}

\section{Discussions}\label{sec:sum}
In this paper, we introduce a novel algorithm, Quantum Multiple Eigenvalue Gaussian filtered Search (QMEGS), designed for addressing the multiple eigenvalues estimation problem. In contrast to preceding algorithms for multiple eigenvalue estimation, QMEGS is the first algorithm capable not only of consistently achieving Heisenberg-limited scaling but also of attaining ``short'' and ``constant'' depth in the presence of gaps between eigenvalues. Additionally, QMEGS offers flexibility in parameter selection. It is not imperative to meticulously set the parameters based on prior knowledge of ${(p_m,\lambda_m)}$ for the algorithm to operate effectively. Numerous numerical results substantiate the efficiency and flexibility of the algorithm.

There are several directions to extend this work.
\begin{itemize}
    \item While QMEGS has small quantum computational complexity, the classical cost of the algorithm is linearly dependent on the inverse of the precision $\epsilon^{-1}$ and the size of the domain (assuming $\lambda_{m}\subset [-\pi,\pi]$). Exploring more efficient search algorithms is an intriguing direction to reduce this linear dependence to logarithmic potentially.

    \item QMEGS demonstrates flexibility in parameter selection, achieving accurate estimations when $\alpha, \sigma, T, N, q^{-1}$ are chosen to be sufficiently large. The challenge of adaptively selecting these parameters to achieve optimal complexity without prior knowledge of ${(p_m,\lambda_m)}$ remains an open problem.

    \item
In this study, we assume the exactness in the Hamiltonian simulation $\exp(-iHt)$ for simplicity. However, this assumption may not hold in the context of early fault-tolerant quantum computers. In cases where the Hamiltonian simulation is not exact, $Z_n$ becomes a biased estimate for $\braket{\psi|\exp(-it_nH)|\psi}$. Investigating the impact of different noise models on QMEGS performance is an intriguing problem, and the results depend on the noise models chosen and the assumptions about $Z_n$. For example, when the bias size is independent of $t$ and sufficiently small, such as in the case of small additive noise, a similar analysis to the proof of \cref{thm:regime_1} (\cref{sec:pf}) can be employed. This allows us to absorb the error into statistical error and non-dominant parts, demonstrating that the method still achieves Heisenberg-limited scaling.

\item
It is important to emphasize that while ESPRIT-type algorithms \cite{Stroeks_2022,shen2023estimating} may not achieve Heisenberg-limited scaling and theoretically require gap assumptions, these algorithms still perform effectively in the absence of gaps between dominant eigenvalues and show rapid error decay with $T_{\max}$. In the presence of a gap, numerical results indicate that the decay rate of the error appears to be $\mathcal{O}(T^{-1.5}_{\max})$. To our knowledge this phenomenon was not previously reported in the literature.  The theoretical underpinning of these phenomena remains an open question, which makes it an intriguing direction for further exploration into the complexity of ESPRIT-type algorithms.

\item As highlighted in \cref{sec:previous_work}, the Gaussian filtering function is not the only option for addressing the MEE problem. Some other filtering functions, such as the Gaussian derivative filter~\cite{wfz22} and derivative Heaviside function~\cite{PRXQuantum.4.040341}, might also be used to solve the MEE problem. The exploration of alternative filtering functions that offer improved complexity performance and flexibility remains an open problem.
\end{itemize}

\vspace{1em}
\textbf{Acknowledgment}

This material is partially supported by the U.S. Department of Energy, Office of Science, National Quantum Information Science Research Centers, Quantum Systems Accelerator (Z.D.), by the Applied Mathematics Program of the US Department of Energy (DOE) Office of Advanced Scientific Computing Research under contract number DE-AC02-05CH1123 and the Simons Investigator program (L.L.), by the National Science Foundation under awards DMS-2011699 and DMS-2208163 (L.Y.), and by DOE Grant No. DE-SC0024124 (R.Z.).
We thank the Institute for Pure and Applied Mathematics
(IPAM) for its hospitality throughout the semester-long program   ``Mathematical and Computational Challenges in Quantum Computing'' in Fall 2023 during which this work was initiated.

\bibliographystyle{quantum}
\bibliography{ref}

\appendix

\section{Proofs}\label{sec:pf}

In this section, we prove \cref{thm:regime_1,thm:regime_2,thm:regime_3}. We first introduce a lemma that helps us bound the error caused by time truncation and random noise. Define the error term
\[
E_j=|E\left(\theta_j\right)|,\quad
E\left(\theta\right)=:\frac{1}{N}\sum^N_{n=1}Z_n\exp(i\theta t_n)-\sum^M_{m=1}p_m\exp\left(-\frac{T^2(\lambda_m-\theta)^2}{2}\right)\,,
\]
where $\theta_j$ is defined in \Cref{alg:main}.
We note that $E_j$ contains errors caused by time truncation and finite samples of $t_n$. We demonstrate the smallness of $E_j$ in the following lemma:
\begin{lemma}\label{lem:truncation_error} Given $\delta>0$ and the pair of overlap-eigenvalues $\left\{\left(p_i,\lambda_i\right)\right\}^M_{m=1}$, if $\sigma=\Omega\left(\log^{1/2}\left(1/\delta\right)\right)$, we have
\begin{equation}\label{eqn:time_truncation_error}
\left|\mathbb{E}\left(\frac{1}{N}\sum^N_{n=1}Z_n\exp(i\theta t_n)\right)-\sum^M_{m=1}p_m\exp\left(-\frac{T^2(\lambda_m-\theta)^2}{2}\right)\right|\leq \delta^2\,.
\end{equation}
Furthermore, define $E\left(\theta\right)=\frac{1}{N}\sum^N_{n=1}Z_n\exp(i\theta t_n)-\sum^M_{m=1}p_m\exp\left(-\frac{T^2(\lambda_m-\theta)^2}{2}\right)$.
Given $\eta>0$ and $\Theta=\left\{\theta_j\right\}^J_{j=1}\cup \left\{\lambda_m\right\}_{m\in\mathcal{D}}$, if $N=\Omega\left(\frac{1}{\delta^2}\log\left(\left(J+|\mathcal{D}|\right)\frac{1}{\eta}\right)\right)$,  we have
\begin{equation}\label{eqn:random_error}
\mathbb{P}\left(\max_{\theta\in\Theta}\left|E(\theta)\right|\leq \delta\right)\geq 1-\eta\,,
\end{equation}
and
\begin{equation}\label{eqn:random_error_2}
\mathbb{P}\left(\cap_{\theta,\theta'\in\Theta,\theta\neq \theta'}\left\{\left|E(\theta)-E(\theta')\right|\leq \sigma T\delta|\theta-\theta'|+\delta^2\right\}\right)\geq 1-\eta\,.
\end{equation}
\end{lemma}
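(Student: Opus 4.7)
My plan is to handle the three claims in order, reusing a single truncated--Gaussian tail estimate and a complex Hoeffding bound. For the bias claim \eqref{eqn:time_truncation_error}, I would first use independence of the samples and \eqref{eqn:Zn_expect} to simplify $\mathbb{E}[Z_n\exp(i\theta t_n)] = \sum_m p_m F(\theta-\lambda_m)$ with $F$ as in \eqref{eqn:fourier_transform_a}, so that the left-hand side of \eqref{eqn:time_truncation_error} equals $\sum_m p_m [F(\theta-\lambda_m) - \exp(-T^2(\theta-\lambda_m)^2/2)]$. Direct substitution of \eqref{eqn:a_T} shows that $F(x) - \exp(-T^2 x^2/2)$ equals the contribution of the point mass at the origin minus the two truncated Gaussian tails, and is uniformly bounded in modulus by $2\int_{|s|\geq \sigma T}\frac{1}{\sqrt{2\pi}T}\exp(-s^2/(2T^2))\,\rd s = O(\exp(-\sigma^2/2)/\sigma)$ via the standard Gaussian tail bound. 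Choosing $\sigma = \Omega(\log^{1/2}(1/\delta))$ with a sufficiently large constant makes this at most $\delta^2$, and summing against $\sum_m p_m = 1$ closes the estimate.

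For the concentration bound \eqref{eqn:random_error}, I would fix $\theta \in \Theta$ and apply Hoeffding's inequality separately to the real and imaginary parts of the i.i.d.\ bounded variables $\{Z_n\exp(i\theta t_n)\}_{n=1}^N$ (each of modulus $\sqrt{2}$ by \eqref{eqn:Zn_expect}). This yields $|\frac{1}{N}\sum_n Z_n\exp(i\theta t_n) - \mathbb{E}[Z_1\exp(i\theta t_1)]| \leq \delta/2$ with failure probability $\exp(-\Omega(N\delta^2))$; combining with the $\delta^2 \leq \delta/2$ bias estimate already proved gives $|E(\theta)| \leq \delta$ at a single point. A union bound over $|\Theta| \leq J+|\mathcal{D}|$ points, together with the stated $N = \Omega(\delta^{-2}\log((J+|\mathcal{D}|)/\eta))$, yields \eqref{eqn:random_error}.

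The Lipschitz-type bound \eqref{eqn:random_error_2} is the main obstacle, because invoking \eqref{eqn:random_error} twice only yields the trivial $O(\delta)$ bound and misses the linear dependence on $|\theta-\theta'|$. The key observation is that since $a_T$ is supported in $[-\sigma T, \sigma T]$, each increment $U_n := Z_n(\exp(i\theta t_n) - \exp(i\theta' t_n))$ satisfies $|U_n| \leq \sqrt{2}|t_n||\theta-\theta'| \leq \sqrt{2}\sigma T|\theta-\theta'|$ almost surely. Decomposing $E(\theta) - E(\theta') = \frac{1}{N}\sum_n (U_n - \mathbb{E}U_n) + B(\theta,\theta')$ with $B$ a deterministic bias, I would apply Hoeffding to the rescaled variables $U_n/(\sqrt{2}\sigma T|\theta-\theta'|)$ to bound the fluctuation term by $O(\sigma T|\theta-\theta'|\sqrt{\log(1/\eta')/N})$, while invoking the bias estimate already proved at both $\theta$ and $\theta'$ gives $|B(\theta,\theta')| \leq 2\delta^2$. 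Taking $\eta' = \eta/|\Theta|^2$ and union-bounding over the $O((J+|\mathcal{D}|)^2)$ ordered pairs (which only changes the logarithm in $N$ by a constant absorbed into the implicit constant) converts the fluctuation bound into $\sigma T\delta|\theta-\theta'|$ under the stated sample complexity, yielding \eqref{eqn:random_error_2}.
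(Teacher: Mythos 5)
Your proposal is correct and follows essentially the same route as the paper: the bias bound \eqref{eqn:time_truncation_error} is obtained exactly as in the paper's proof, by identifying $\mathbb{E}[Z_n e^{i\theta t_n}]=\sum_m p_m F(\theta-\lambda_m)$ and bounding the discrepancy with the untruncated Gaussian by the tail mass beyond $\sigma T$. For \eqref{eqn:random_error} and \eqref{eqn:random_error_2} the paper simply cites an external lemma from \cite{ding2023robust}, whereas you supply the underlying argument explicitly (Hoeffding on real and imaginary parts plus a union bound, and, for the increment bound, the key observation that $|t_n|\le\sigma T$ almost surely makes $Z_n(e^{i\theta t_n}-e^{i\theta' t_n})$ bounded by $\sqrt{2}\,\sigma T|\theta-\theta'|$); this is the standard proof of the cited estimate and all constant-factor slack (e.g.\ $2\delta^2$ versus $\delta^2$ in the bias of the increment) is absorbed by the implicit constants in the $\Omega$-conditions on $\sigma$ and $N$.
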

\begin{proof}
According to \cref{alg:data}, when $\sigma=\Omega\left(\log^{1/2}\left(1/\delta\right)\right)$, we have
\begin{equation}
\begin{aligned}
&\left|\mathbb{E}\left(\frac{1}{N}\sum^N_{n=1}Z_n\exp(i\theta t_n)\right)-\sum^M_{m=1}p_m\exp\left(-\frac{T^2(\lambda_m-\theta)^2}{2}\right)\right|\\
\le&  \int_{|t|>\sigma T}\frac{1}{\sqrt{2\pi} T}e^{-\frac{t^2}{2T^2}}\left|\sum^M_{m=1}p_me^{-it\la_m}e^{i\theta t}\right|\d t=2\int_{\sigma T}^\infty\frac{1}{\sqrt{2\pi} T}e^{-\frac{t^2}{2T^2}}\d t\\
=&\sqrt{\frac{2}{\pi}}\int_{\sigma}^\infty e^{-\frac{s^2}{2}}\d s< e^{-\sigma^2}<\delta^2,
\end{aligned}
\end{equation}
where the inequality $\int_{x}^\infty e^{-\frac{1}{2}t^2}\d t \le \sqrt{\frac{\pi}{2}}e^{-\frac{x^2}{2}}$ is used in the second last inequality. This proves \eqref{eqn:time_truncation_error}. The proof of \eqref{eqn:random_error} and \eqref{eqn:random_error_2} is the same as the proof of \cite[Appendix B.3 Lemma 4 eqn. (B8)]{ding2023robust}, thus, we omit it here.
\end{proof}
Recall that
\[
G(\theta) = \left|\frac{1}{N}\sum^N_{n=1}Z_n\exp(i\theta t_n)\right|\,.
\]
We are ready to prove \cref{thm:regime_1}.
\begin{proof}[Proof of \cref{thm:regime_1}]
According to \cref{lem:truncation_error} \eqref{eqn:random_error} with $\delta=\mathcal{O}(p_{\min}-\ptail)$ and $J=\mathcal{O}(T/q)$,  we obtain
\begin{equation}\label{eqn:random_error_regime_regime_1}
\mathbb{P}\left(\max_{\theta \in\Theta}\left|E(\theta)\right|<\frac{p_{\min}-\ptail}{8}\right)\geq 1-\eta\,.
\end{equation}
Thus, to prove Theorem \ref{thm:regime_1}, it suffices to show \eqref{eqn:distance_bound_regime_1} assuming $E_j<\frac{p_{\min}-\ptail}{8}$. Under this condition, we consider two classes of candidates:
\begin{itemize}
    \item When $\theta_j\in \cup_{m\in\mathcal{D}}\left[\lambda_m-\frac{q}{T},\lambda_m+\frac{q}{T}\right]$, we obtain
    \[
    \begin{aligned}
    G_j=&~G(\theta_j)\geq \left|\sum^M_{m=1}p_m\exp\left(-\frac{T^2(\lambda_m-\theta_j)^2}{2}\right)\right|-E_j\\
    > &~p_{\min}\exp\left(-\frac{q^2}{2}\right)-\frac{p_{\min}-\ptail}{8}\\
    \geq &~ \ptail+\frac{p_{\min}-\ptail}{2}-\frac{p_{\min}-\ptail}{8}\geq \ptail+\frac{3(p_{\min}-\ptail)}{8}\,.
    \end{aligned}
    \]
    where we use $E_j<\frac{p_{\min}-\ptail}{8}$ in the second inequality, and the condition of $q$ in the last inequality.

    \item When $\theta_j\notin \cup_{m\in\mathcal{D}}\left[\lambda_m-\frac{\alpha}{3T},\lambda_m+\frac{\alpha}{3T}\right]$, we obtain
    \[
    \begin{aligned}
    G_j\leq &~\left|\sum_{m\in\mathcal{D}}p_m\exp\left(-\frac{T^2(\lambda_m-\theta_j)^2}{2}\right)\right|+\left|\sum_{m\in\mathcal{D}^ c}p_m\exp\left(-\frac{T^2(\lambda_m-\theta_j)^2}{2}\right)\right|+E_j\\
    \leq &~\exp\left(-\frac{\alpha^2}{18}\right)+\ptail+\frac{p_{\min}-\ptail}{8}\leq \ptail+\frac{p_{\min}-\ptail}{4}\,,
    \end{aligned}
    \]
    where $\mathcal{D}^c:=[M]\backslash \mathcal{D}$, and we use $E_j<\frac{p_{\min}-\ptail}{8}$ in the second inequality and the condition of $\alpha$ in the last inequality.
\end{itemize}
Because $q<\alpha/3$, the above two inequalities imply
\begin{equation}\label{eqn:separation_regime_1}
\max_{\theta_j\notin \cup_{m\in\mathcal{D}}\left[\lambda_m-\frac{\alpha}{3T},\lambda_m+\frac{\alpha}{3T}\right]}G_j<\min_{\theta_j\in \cup_{m\in\mathcal{D}}\left[\lambda_m-\frac{q}{T},\lambda_m+\frac{q}{T}\right]} G_j\,.
\end{equation}
Now, we prove \eqref{eqn:distance_bound_regime_1} using the proof by contradiction argument. Assuming that there exists $m^\star\in\mathcal{D}$ such that
\[
\lambda_{m^\star}\notin \cup^{K}_{k=1}\left[\vtheta_k-\frac{\alpha}{T},\vtheta_k+\frac{\alpha}{T}\right]\,,
\]
then the grid point $\vtheta^*$ closest to $\lambda_{m^\star}$ is not in the block set since $\alpha/q\in\mathbb{N}$ and the block set is a union of open intervals.
According to \eqref{eqn:separation_regime_1}, we must have $\vtheta_k\in \cup_{m\in\mathcal{D}, m\not=m^\star}\left[\lambda_m-\frac{\alpha}{3T},\lambda_m+\frac{\alpha}{3T}\right]$ for all $1\leq k\leq |\mathcal{D}|$. By the pigeonhole principle, there exist $k_1<k_2$ and $m'\in\mathcal{D}$ such that $\left\{\vtheta_{k_1},\vtheta_{k_2}\right\}\subset \left[\lambda_{m'}-\frac{\alpha}{3T},\lambda_{m'}+\frac{\alpha}{3T}\right]$. Thus, $|\vtheta_{k_1}-\vtheta_{k_2}|\le \frac{2\alpha}{3T}$. However, this contradicts the blocking process that ensures
\[
\vtheta_{k_2}\notin \left(\vtheta_{k_1}-\frac{\alpha}{T},\vtheta_{k_1}+\frac{\alpha}{T}\right)\,.
\]
This concludes the proof.
\end{proof}

Next, we give the proof of \cref{thm:regime_2}. Define the tail term that contains the impact of non-dominant eigenvalues:
\begin{equation}\label{eqn:G_tail}
G_{\mathrm{tail}}(\theta)=\sum_{m\in\mathcal{D}^c}p_m\exp\left(-\frac{T^2(\lambda_m-\theta)^2}{2}\right)\,.
\end{equation}
The proof of \cref{thm:regime_2} is as follows.
\begin{proof}[Proof of \cref{thm:regime_2}]
Because $p_{\min}>p_{\mathrm{tail}}$, it is straightforward to see that the condition of \cref{thm:regime_1} is satisfied. In addition, using the condition of $N$ with \cref{lem:truncation_error} by setting $\delta=\zeta\ptail/10$ and $J=\mathcal{O}(T/q)$,
we obtain
\begin{equation}\label{eqn:error_small_1_regime_2}
\mathbb{P}\left(\max_{\theta\in\Theta}\left|E(\theta)\right|\leq \zeta\ptail/10\right)\geq1-\eta/2\,,
\end{equation}
and
\begin{equation}\label{eqn:error_small_1_regime_3}
\mathbb{P}\left(\cap_{\theta,\theta'\in\Theta,\theta\neq \theta'}\left\{\left|E(\theta)-E(\theta')\right|=\sigma T\zeta\ptail|\theta-\theta'|/10+\zeta^2\ptail^2/100\right\}\right)\geq 1-\eta/2\,.
\end{equation}
where $\Theta=\left\{\theta_j\right\}^J_{j=1}\cup \left\{\lambda_m\right\}_{m\in\mathcal{D}}$.

To prove the theorem, it suffices to show that when \eqref{eqn:error_small_1_regime_2} and \eqref{eqn:error_small_1_regime_3} hold, \eqref{eqn:distance_bound} holds.
Fixed $1\leq k\leq |\mathcal{D}|$. Define $\mathcal{D}_{k_-}:=\{m\in\mathcal{D}:\lambda_m\in \mathcal{I}_{B,k}\}$, where $\mathcal{I}_{B,k}$ is the block interval defined in the algorithm. We note that $\mathcal{D}_{k_-}$ is the set of indices that have already been covered by the blocked set. Because $T=\Omega\left(\frac{\alpha}{\Deltam}\right)$, we find that each interval $\left[\vtheta_j-\frac{\alpha}{T},\vtheta_j+\frac{\alpha}{T}\right]$ contains exactly one dominant eigenvalue for all $1\leq j\leq k-1$ and ${\cal D}\backslash \mathcal{D}_{k_-}\neq \emptyset$. Then, to prove \eqref{eqn:distance_bound}, it suffices to show that, if $\left|\vtheta_j-\lambda_{m_j}\right|\leq \frac{Q}{T}$ for $j<k$\footnote{When $k=1$, we don't need this condition.}, then there exists $m_k\in\mathcal{D}\setminus\mathcal{D}_{k_-}$ such that
\begin{equation}\label{eqn:one_k}
\left|\vtheta_k-\lambda_{m_k}\right|\leq \frac{Q}{T}\,.
\end{equation}

The proof of \eqref{eqn:one_k} consists of two main steps. In the first step, we directly control the error of the filter function to establish a loose bound $T|\lambda_{m_k}-\vtheta_{k}|=\mathcal{O}(\zeta^{-1/2})$.
When $p_{\min}\gg \ptail$, this results in $T|\lambda_{m_k}-\vtheta_{k}|=\mathcal{O}((\ptail/p_{\min})^{1/2})$, a bound that is weaker than the one in \eqref{eqn:one_k}.
This weaker bound allows the confinement of $\vtheta_{m_k}$ to a narrow region around $\lambda_i$. Then we employ a Taylor expansion of the filter function within this restricted region and refine the bound to $T|\lambda_i-\vtheta_{k_i}|=\mathcal{O}(\ptail/p_{\min})$.

Define $m^\star_{k}=\mathrm{argmin}_{m\in\mathcal{D}}\left|\vtheta_k-\lambda_{m}\right|$, $\lambda^\star_k=\lambda_{m^\star_{k}}$, $p^\star_k=p_{m^\star_{k}}$, and $\vtheta^\star_k=\mathrm{argmin}_{\theta_j}\left|\theta_j-\lambda_{m^\star_{k}}\right|$. It is straightforward to see $\lambda^\star_k\notin\mathcal{I}_{B,k}$ because each interval $\left[\vtheta_j-\frac{\alpha}{T},\vtheta_j+\frac{\alpha}{T}\right]$ contains exactly one dominant eigenvalue for every $j\in [k-1]$. Next, because the candidates $\theta_j$ are chosen with the step size $q/T$, we have
\[
\left|\vtheta^\star_k-\lambda^\star_k\right|\leq \frac{q}{T}\,.
\]
Furthermore, because $\alpha/q\in\mathbb{N}$, and $\lambda_{k}^\star$ is not covered by the blocked set, we must have $\vtheta^\star_k\notin \mathcal{I}_{B,k}$. In addition, we have $G(\vtheta^\star_k)$ is close to $G(\lambda^\star_k)$, as given by
\begin{equation}\label{eq:G_theta_star_G_lambda_star}
\begin{aligned}
 &\left|G(\vtheta^\star_k)-G(\lambda^\star_k)\right|\\
\leq &~\left|p^\star_k\left(1-\exp\left(-\frac{T^2\left(\lambda^\star_k-\vtheta^\star_k\right)^2}{2}\right)\right)\right|\\
&+\left|\sum_{m\in \mathcal{D}\setminus\{m^\star_k\}}p_m\exp\left(-\frac{T^2\left(\lambda_m-\lambda^\star_k\right)^2}{2}\right)-p_m\exp\left(-\frac{T^2\left(\lambda_m-\vtheta^\star_k\right)^2}{2}\right)\right|\\
&+\left|G_{\rm tail}(\vtheta^\star_k)-G_{\rm tail}(\lambda^\star_k)\right|+\left|E(\vtheta^\star_k)-E(\lambda^\star_k)\right|,
\end{aligned}
\end{equation}
\iffalse
\begin{aligned}
&\left|G(\vtheta^\star_k)-G(\lambda^\star_k)\right|\\
\leq &\underbrace{\left|p^\star_k\left(1-\exp\left(-\frac{T^2\left(\lambda^\star_k-\vtheta^\star_k\right)^2}{2}\right)\right)\right|}_{=\mathcal{O}(q^2)\ \text{due to $\left|\vtheta^\star_k-\lambda^\star_k\right|\leq \frac{q}{T}$}}\\
&+\underbrace{\left|\sum_{m\in \mathcal{D}\setminus\{m^\star_k\}}p_m\exp\left(-\frac{T^2\left(\lambda_m-\lambda^\star_k\right)^2}{2}\right)-p_m\exp\left(-\frac{T^2\left(\lambda_m-\vtheta^\star_k\right)^2}{2}\right)\right|}_{=\mathcal{O}(\exp(-\Theta(\alpha^2)))\ \text{due to $T=\Omega(\alpha/\Deltam)$ and $\left|\vtheta^\star_k-\lambda_m\right|\geq \frac{\Deltam}{2}$}}\\
&+\underbrace{\left|G_{\rm tail}(\vtheta^\star_k)-G_{\rm tail}(\lambda^\star_k)\right|}_{=\mathcal{O}(\ptail q)\ \text{due to $\ptail T$-Lipschitz property of $G_{\rm tail}(\theta)$ and $\left|\vtheta^\star_k-\lambda^\star_k\right|\leq \frac{q}{T}$}}\\
&+\underbrace{\left|E(\vtheta^\star_k)-E(\lambda^\star_k)\right|}_{=\zeta\ptail(\sigma q/10+\zeta\ptail/100)\ \text{due to \eqref{eqn:error_small_1_regime_3} and $\left|\vtheta^\star_k-\lambda^\star_k\right|\leq \frac{q}{T}$}}\\
&
\leq \frac{\zeta\ptail^2}{10}\, ,
\end{aligned}
\fi
where $G_{\rm tail}$ is defined in \eqref{eqn:G_tail}. For the first term in \eqref{eq:G_theta_star_G_lambda_star}, since $\left|\vtheta^\star_k-\lambda^\star_k\right|\leq \frac{q}{T}$, we have
\begin{align*}
    \left|p^\star_k\left(1-\exp\left(-\frac{T^2\left(\lambda^\star_k-\vtheta^\star_k\right)^2}{2}\right)\right)\right| = {\cal O}(q^2).
\end{align*}
For the second term in \eqref{eq:G_theta_star_G_lambda_star}, since $T=\Omega(\alpha/\Deltam)$ and $\left|\vtheta^\star_k-\lambda_m\right|\geq \frac{\Deltam}{2}$, we have
\begin{align*}
    \left|\sum_{m\in \mathcal{D}\setminus\{m^\star_k\}}p_m\exp\left(-\frac{T^2\left(\lambda_m-\lambda^\star_k\right)^2}{2}\right)-p_m\exp\left(-\frac{T^2\left(\lambda_m-\vtheta^\star_k\right)^2}{2}\right)\right| = \mathcal{O}(\exp(-\Theta(\alpha^2))).
\end{align*}
For the third term in \eqref{eq:G_theta_star_G_lambda_star}, by the $\ptail T$-Lipschitz property of $G_{\rm tail}(\theta)$ and $\left|\vtheta^\star_k-\lambda^\star_k\right|\leq \frac{q}{T}$, we have
\begin{align*}
    \left|G_{\rm tail}(\vtheta^\star_k)-G_{\rm tail}(\lambda^\star_k)\right| = \mathcal{O}(\ptail q).
\end{align*}
For the last term in \eqref{eq:G_theta_star_G_lambda_star}, by \eqref{eqn:error_small_1_regime_3} and $\left|\vtheta^\star_k-\lambda^\star_k\right|\leq \frac{q}{T}$, we have
\begin{align*}
    \left|E(\vtheta^\star_k)-E(\lambda^\star_k)\right| \leq \zeta\ptail(\sigma q/10+\zeta\ptail/100).
\end{align*}
Combining them, we get that
\begin{align}\label{eqn:difference_small}
    \left|G(\vtheta^\star_k)-G(\lambda^\star_k)\right|\leq \frac{\zeta\ptail^2}{10},
\end{align}
which follows by employing employ \eqref{eqn:condition:para_regime_2} to control the parameters $\sigma,\alpha,q$.

Now we show that $\vtheta_k$ is close to $\lambda^\star_k$ using $\vtheta^\star_k$ as a bridge. First, we notice
\[
\left|\vtheta_k-\lambda_j\right|\geq \frac{\Deltam}{2},\quad \forall j\in{\cal D}\setminus\{m^\star_k\}\,.
\]
Combining this with the condition that $T=\Omega\left(\log^{1/2}(1/(\zeta\ptail))\Deltam^{-1}\right)$, we have
\begin{equation}\label{eqn:other_dominant_ eigenvalue_small_regime_2}
\sum_{m\in \mathcal{D}\setminus\{m^\star_k\}}p_m\exp\left(-\frac{T^2\left(\lambda_m-\lambda^\star_k\right)^2}{2}\right)\leq \frac{\zeta\ptail^2}{10},\quad \sum_{m\in \mathcal{D}\setminus\{m^\star_k\}}p_m\exp\left(-\frac{T^2\left(\lambda_m-\vtheta_{k}\right)^2}{2}\right)\leq \frac{\zeta\ptail^2}{10}\,.
\end{equation}
Next, we use $|G_{\mathrm{tail}}|\leq \ptail$, \eqref{eqn:error_small_1_regime_2}, and \eqref{eqn:other_dominant_ eigenvalue_small_regime_2} to obtain
\begin{equation}\label{eqn:G_square_first}
\begin{aligned}
    G\left(\vtheta_k\right)\leq&~ p^\star_{k}\exp\left(-T^2\left(\lambda^\star_k-\vtheta_{k}\right)^2/2\right)+\ptail+\frac{\zeta\ptail+\zeta\ptail^{2}}{10}\\
    \le&~p^\star_{k}\exp\left(-T^2\left(\lambda^\star_k-\vtheta_{k}\right)^2/2\right)+\ptail+\frac{\zeta\ptail}{5}
\end{aligned}
\end{equation}
and
\begin{equation}\label{eqn:G_square_2_first}
\begin{aligned}
G\left(\lambda^\star_k\right)\geq p^\star_k+\sum_{m\neq \{m^\star_k\}}p_m\exp\left(-\frac{T^2\left(\lambda_m-\lambda^\star_k\right)^2}{2}\right)-\left|E(\lambda^\star_k)\right|\geq p^\star_k-\frac{\zeta\ptail}{10}\,.
\end{aligned}
\end{equation}
Furthermore, \eqref{eqn:difference_small} and the assumption that $\vtheta_k$ is the maximal point imply that
\begin{equation}\label{eqn:G_inequal_first}
G\left(\vtheta_k\right)\geq G\left(\vtheta^\star_k\right)\geq G\left(\lambda^\star_k\right)-\frac{\zeta\ptail}{10}\,.
\end{equation}
Combining \eqref{eqn:G_square_first}, \eqref{eqn:G_square_2_first}, and \eqref{eqn:G_inequal_first}, we obtain
\[
p^\star_{k}\left(1-\exp\left(-T^2\left(\lambda^\star_k-\vtheta_{k}\right)^2/2\right)\right)\leq \left(1+2\zeta/5\right)\ptail\,,
\]
which implies
\begin{equation}\label{eqn:rough_bound_first}
\left|T(\lambda^\star_k-\vtheta_k)\right|\leq \sqrt{-2\ln\left(1-\frac{\left(1+2\zeta/5\right)\ptail}{p_{\min}}\right)}=\mathcal{O}\left(\zeta^{-1/2}\right)\,.
\end{equation}

We note that, in the case where $\zeta^{-1}=\mathcal{O}\left(\ptail/p_{\min}\right)$, the upper bound proved above scales with $\left(\ptail/p_{\min}\right)^{1/2}$, while the goal is to improve the upper bound to $Q$, which linearly depends on $\ptail/p_{\min}$. Now we show how to improve \eqref{eqn:rough_bound_first}.
Similar to \eqref{eqn:G_square_first} and \eqref{eqn:G_square_2_first},
\begin{equation}\label{eqn:G_square}
\begin{aligned}
&\left|G^2\left(\vtheta_k\right)-(p^\star_{k})^2\exp\left(-T^2\left(\lambda^\star_k-\vtheta_{k}\right)^2\right)\right.\\
&\left.-2p^\star_{k}\mathrm{Re}\left(\exp\left(-T^2\left(\lambda^\star_k-\vtheta_{k}\right)^2/2\right)\left(G_{\mathrm{tail}}(\vtheta_k)+E(\vtheta_k)\right)\right)\right|=\mathcal{O}(\ptail^{2})
\end{aligned}
\end{equation}
and
\begin{equation}\label{eqn:G_square_2}
\left|G^2\left(\lambda^\star_k\right)-\left[(p^\star_{k})^2+2p^\star_{k}\mathrm{Re}\left(\left(G_{\mathrm{tail}}(\lambda^\star_k)+E(\lambda^\star_k)\right)\right)\right]\right|=\mathcal{O}(\ptail^{2})\,.
\end{equation}
Furthermore, \eqref{eqn:difference_small} and the assumption that $\vtheta_k$ is the maximal point imply that
\begin{equation}\label{eqn:G_inequal}
G^2\left(\vtheta_k\right)\geq G^2\left(\vtheta^\star_k\right)\geq G^2\left(\lambda^\star_k\right)-\mathcal{O}(\ptail^{2})\,.
\end{equation}
Combining \eqref{eqn:G_square}, \eqref{eqn:G_square_2}, and \eqref{eqn:G_inequal}, we obtain
\[
\begin{aligned}
&p^\star_{k}\left(\exp\left(-T^2\left(\lambda^\star_k-\vtheta_{k}\right)^2\right)-1\right)\\
\geq&~2\mathrm{Re}\left(-\exp\left(-T^2\left(\lambda^\star_k-\vtheta_{k}\right)^2/2\right)\left(G_{\mathrm{tail}}(\vtheta_k)+E(\vtheta_k)\right)+\left(G_{\mathrm{tail}}(\lambda^\star_k)+E(\lambda^\star_k)\right)\right)-
O(\ptail^{2}/p_{\min})\\
\geq &~2\mathrm{Re}\left(\left(1-\exp\left(-T^2\left(\lambda^\star_k-\vtheta_{k}\right)^2/2\right)\right)\left(G_{\mathrm{tail}}(\vtheta_k)+E(\vtheta_k)\right)\right)\\
&+2\mathrm{Re}\left(-\left(G_{\mathrm{tail}}(\vtheta_k)+E(\vtheta_k)\right)+\left(G_{\mathrm{tail}}(\lambda^\star_k)+E(\lambda^\star_k)\right)\right)-
O(\ptail^{2}/p_{\min})\\
\geq &~2\mathrm{Re}\left(\left(1-\exp\left(-T^2\left(\lambda^\star_k-\vtheta_{k}\right)^2/2\right)\right)E(\vtheta_k)\right)\\
&+2\mathrm{Re}\left(-\left(G_{\mathrm{tail}}(\vtheta_k)+E(\vtheta_k)\right)+\left(G_{\mathrm{tail}}(\lambda^\star_k)+E(\lambda^\star_k)\right)\right)-
O(\ptail^{2}/p_{\min})\,.
\end{aligned}
\]

Define $F_k(\theta)=\exp\left(-\frac{T^2\left(\lambda^\star_k-\theta\right)^2}{2}\right)$\footnote{
We note that $F_k(\theta)\approx \int^\infty_{-\infty}a(t)\exp(i(\lambda^\star_k-\theta)t)dt=F(\lambda^\star_k-\theta)$, where $F$ is defined as the Fourier transform of $a(t)$ (refer to \eqref{eqn:fourier_transform_a}).
}.
Then, the above inequality can be rewritten as
\[
\begin{aligned}
&p^\star_k(1-F_k(\vtheta_k))(1+F_k(\vtheta_k))+2\mathrm{Re}\left((1-F_k(\vtheta_k))E(\vtheta_k)\right)\\
\leq &2\left|G_{\mathrm{tail}}(\vtheta_k)-G_{\mathrm{tail}}(\lambda^\star_k)\right|+2\left|E(\vtheta_k)-E(\lambda^\star_k)\right|+\mathcal{O}(\ptail^{2}/p_{\min})\,.
\end{aligned}
\]
Because $|E(\vtheta_k)|\leq \zeta\ptail/10$ according to \eqref{eqn:error_small_1_regime_2}, the above inequality further implies
\begin{equation}\label{eqn:important_inequa_regime_2}
\begin{aligned}
    &\left(p_{\min}-\zeta\ptail/5\right)\left(1-F_k(\vtheta_k)\right)\\
    \leq &2\left|G_{\mathrm{tail}}(\vtheta_k)-G_{\mathrm{tail}}(\lambda^\star_k)\right|+2\left|E(\vtheta_k)-E(\lambda^\star_k)\right|+\mathcal{O}(\ptail^{2}/p_{\min})\\
    \leq &2(\sigma\zeta+1)T\ptail\left|\vtheta_k-\lambda^\star_k\right|+\mathcal{O}\left(\ptail^{2}/p_{\min}\right)\,.
\end{aligned}
\end{equation}
where we use the fact that $F_k(\vtheta_k)\in[0,1]$ in the first inequality, and use \eqref{eqn:error_small_1_regime_3} and the fact that $G_{\mathrm{tail}}$ is $\ptail T$-Lipschitz in the last inequality.

Using \eqref{eqn:rough_bound_first}, we obtain
\[
1-F_k(\vtheta_k)= 1-\exp\left(-\frac{T^2\left(\vtheta_k-\lambda^\star_k\right)^2}{2}\right)\geq \frac{\exp\left(-\Theta(\zeta^{-1})\right)T^2\left(\vtheta_k-\lambda^\star_k\right)^2}{2}\,.
\]
Plugging this into \eqref{eqn:important_inequa_regime_2}, we obtain
\[
\left(p_{\min}-\zeta\ptail/5\right)\frac{\exp\left(-\Theta(\zeta^{-1})\right)}{2}T^2\left(\vtheta_k-\lambda^\star_k\right)^2-2(\sigma\zeta+1)\ptail T\left|\vtheta_k-\lambda^\star_k\right|=\mathcal{O}\left(\ptail^{2}/p_{\min}\right)\,.
\]
Viewing the left-hand side as a quadratic function with respect to $T\left|\vtheta_k-\lambda^\star_k\right|$, we obtain
\begin{equation}\label{eqn:final_inequa_regime_2}
\left|\vtheta_k-\lambda^\star_k\right|=\mathcal{O}\left(\frac{\exp\left(\Theta(\zeta^{-1})\right)(\sigma\zeta+1)\ptail}{\left(p_{\min}-\zeta\ptail/5\right)T}\right)\leq \frac{Q}{T}\,.
\end{equation}
This concludes the proof.
\end{proof}

We prove \cref{thm:regime_3} as follows.
\begin{proof}[Proof of \cref{thm:regime_3}]
The proof is similar to the proof of \cref{thm:regime_2}. For simplicity, we omit some details in this proof.

By \cref{lem:truncation_error} with $\delta=\zeta$ and $J=\mathcal{O}(T/q)$, it holds that
\begin{equation}\label{eqn:random_error_regime3}
\mathbb{P}\left(\max_{\theta\in\Theta}\left|E(\theta)\right|=\mathcal{O}(\zeta)\right)\geq 1-\eta\,,
\end{equation}
and
\begin{equation}\label{eqn:random_error_2_regime3}
\mathbb{P}\left(\cap_{\theta,\theta'\in\Theta,\theta\neq \theta'}\left\{\left|E(\theta)-E(\theta')\right|=\mathcal{O}\left(\sigma T\zeta|\theta-\theta'|+\zeta\right)\right\}\right)\geq 1-\eta\,,
\end{equation}
where $\Theta=\left\{\theta_j\right\}^J_{j=1}\cup \left\{\lambda_m\right\}_{m\in\mathcal{D}}$. Thus, it suffices to prove \eqref{eqn:distance_bound_regime_3} assuming \eqref{eqn:random_error_regime3} and \eqref{eqn:random_error_2_regime3} hold.

Because $T=\Omega\left(\frac{\alpha}{\Delta}\right)$, for any $j\in [k-1]$, the interval $\left[\vtheta_j-\frac{\alpha}{T},\vtheta_j+\frac{\alpha}{T}\right]$ contains exactly one dominant eigenvalue, and those intervals that contain $\{\lambda_j\}_{j\in {\cal D}}$ will not contain any $\lambda_{j'}$ for $j'\in {\cal D}^c$. Define $m^\star_{k}:=\mathrm{argmin}_{m\in\mathcal{D}}\left|\vtheta_k-\lambda_{m}\right|$, $\lambda^\star_k:=\lambda_{m^\star_{k}}$, $p^\star_k:=p_{m^\star_{k}}$, and $\vtheta^\star_k:=\mathrm{argmin}_{\theta_j}\left|\theta_j-\lambda_{m^\star_{k}}\right|$. Similar to the argument as the proof of \cref{thm:regime_2}, we know that $\left|\vtheta^\star_k-\lambda^\star_k\right|\leq \frac{q}{T}$ and $\vtheta^\star_k\notin\mathcal{I}_{B,k}$.
In the following proof, we show that $|\vtheta_k-\lambda_k^\star|\leq \frac{Q}{T}$. Define the filter function that contains the impact of other dominant eigenvalues:
\begin{align*}
    G_{\rm dom}(\theta):=\sum_{m\in {\cal D}\backslash\{m_k^\star\}} p_m\exp\left(-\frac{T^2\left(\lambda_m-\theta\right)^2}{2}\right).
\end{align*}
We first prove that $G_{\rm dom}(\lambda_k^\star)$, $G_{\rm dom}(\vtheta_k^\star)$, and $G_{\rm dom}(\vtheta_k)$ are of order $O(\zeta^2)$ similar to the proof of \cref{thm:regime_2}. Because $\lambda^\star_k$ is the closest dominant eigenvalue for $\vtheta^\star_k$ and $\vtheta_k$ and $\alpha/T<\Delta/2$, we obtain
\[
|\lambda_k^\star-\lambda_j|\geq \Delta,\quad \left|\vtheta_k-\lambda_j\right|\geq \frac{\Delta}{2},\quad \left|\vtheta_k-\lambda_j\right|\geq \frac{\Delta}{2},\quad \forall j\in{\cal D}\setminus\{m^\star_k\}\,.
\]
Combining this with $T=\Omega(\log^{1/2}(1/\zeta)/\Delta)$, we obtain
\begin{align}
    &\max\left\{G_{\rm dom}(\lambda_k^\star),G_{\rm dom}(\vtheta_k^\star),G_{\rm dom}(\vtheta_k)\right\}
    \leq \sum_{m\in {\cal D}\backslash \{m_k^\star\}} p_m\exp\left(-\Omega(T^2\Delta^2)\right)
    =  O(\zeta^2)\,.\label{eq:bound_dom_lambda_star_regime}
\end{align}
Then, we upper bound the tail errors $G_{\rm tail}(\lambda_k^\star)$ and $G_{\rm tail}(\vtheta_k)$. For $G_{\rm tail}(\lambda_k^\star)$, since $|\lambda_k^\star-\lambda_m|\geq \Delta$ for any $m\in [M]$, using the same calculation as \eqref{eq:bound_dom_lambda_star_regime}, it holds that
\begin{align}\label{eq:bound_tail_lambda_star_regime3}
    G_{\rm tail}(\lambda_k^\star)=O(\zeta^2\ptail).
\end{align}
where we also use $\sum_{m\in\mathcal{D}^c}p_m\leq \ptail$. For $G_{\rm tail}(\vtheta_k)$, according to \cref{thm:regime_1}, we have $\left|\vtheta_k-\lambda^\star_k\right|\leq \alpha/T\leq \Delta/2$. This implies $|\vtheta_k-\lambda_m|>\frac{\Delta}{2}$ for any $m\in {\cal D}^c$. Therefore, using a similar calculation as \eqref{eq:bound_dom_lambda_star_regime}, we can show that
\begin{align}\label{eq:bound_tail_vtheta_regime3}
    G_{\rm tail}(\vtheta_k)=O(\zeta^2\ptail).
\end{align}
Now, we are ready to upper bound $|\vtheta_k-\lambda_k^\star|$. Similar to \eqref{eqn:difference_small}, we can show $|G(\vtheta_k^\star) - G(\lambda_k^\star)| = O(\zeta^2)$. This implies
\begin{align}\label{eqn:order_relation_regime_3}
    G(\vtheta_k)^2 \geq G(\vtheta_k^\star)^2\geq G(\lambda_k^\star)^2 -O(\zeta^2)\,.
\end{align}
Define $F_k(\theta)=\exp\left(-\frac{T^2\left(\lambda^\star_k-\theta\right)^2}{2}\right)$. Since $G(\theta)=|p_k^\star F_k(\theta) + E(\theta) + G_{\rm dom}(\theta)+G_{\rm tail}(\theta)|$,
we have
\begin{align*}
    G(\vtheta_k)^2
    =&~ {p_k^\star}^2F_k(\vtheta_k)^2 + |E(\vtheta_k)|^2 + G_{\rm dom}(\vtheta_k)^2 + G_{\rm tail}(\vtheta_k)^2\\
    &+2p_k^{\star} F_k(\vtheta_k)({\rm Re}(E(\vtheta_k))+G_{\rm dom}(\vtheta_k) + G_{\rm tail}(\vtheta_k)) \\
    &+ 2({\rm Re}(E(\vtheta)) (G_{\rm dom}(\vtheta_k) + G_{\rm tail}(\vtheta_k)) + 2G_{\rm dom}(\vtheta_k) G_{\rm tail}(\vtheta_k)\\
    = &~ {p_k^\star}^2F_k(\vtheta_k)^2 + G_{\rm tail}(\vtheta_k)^2+ 2p_k^{\star} F_k(\vtheta_k)({\rm Re}(E(\vtheta_k)) + G_{\rm tail}(\vtheta_k))+2{\rm Re}(E(\vtheta_k))G_{\rm tail}(\vtheta_k)+O(\zeta^2)\\
    = &~ {p_k^\star}^2F_k(\vtheta_k)^2 + 2p_k^{\star} F_k(\vtheta_k){\rm Re}(E(\vtheta_k))  + O(\zeta^2)\,,
\end{align*}
where the second step follows from \eqref{eqn:random_error_regime3} and \eqref{eq:bound_dom_lambda_star_regime}, and the third step follows from \eqref{eq:bound_tail_vtheta_regime3}. Similarly, for $G(\lambda_k^\star)^2$, by \eqref{eqn:random_error_regime3}, \eqref{eq:bound_dom_lambda_star_regime}, and \eqref{eq:bound_tail_lambda_star_regime3}, we have the following:
\begin{align*}
    G(\lambda_k^\star)^2\geq {p_k^\star}^2 +  2p_k^{\star} {\rm Re}(E(\lambda_k^\star))-O(\zeta^2)
\end{align*}
Similar to the \eqref{eqn:important_inequa_regime_2}, the above two inequalities, \eqref{eqn:random_error_2_regime3}, and \eqref{eqn:order_relation_regime_3} imply
\begin{align}\label{eq:final_ineq_regime3}
    (p_{\min}-2\zeta)(1-F_k(\vtheta_k)) \leq O(\zeta)\cdot \min\{\sigma T |\vtheta_k - \lambda_k^\star|, 1\} +O(\zeta^2/p_{\min})\,.
\end{align}
Similar to \eqref{eqn:important_inequa_regime_2}-\eqref{eqn:final_inequa_regime_2}, using $\zeta<p_{\min}/4$, we further obtain
\begin{align*}
    |\vtheta_k-\lambda_k^\star|=O\left(\frac{\zeta\sigma}{p_{\min}-2\zeta} \frac{1}{T}\right)\leq \frac{Q}{T}\,.
\end{align*}
The theorem is then proved.
\end{proof}

\section{Other quantum phase estimation algorithms}\label{sec:summary_numerical_method}
In this section, we give a brief summary of the previous quantum phase estimation algorithms that are tested in \cref{sec:num}: MM-QCELS~\cite{Ding2023simultaneous}, QPE (textbook version~\cite{NielsenChuang2000}), and ESPRIT~\cite{Stroeks_2022}.

\begin{itemize}
\item (MM-QCELS~\cite{Ding2023simultaneous}): The dataset used in MM-QCELS is similar to QMEGS (\cref{alg:main}) and is also generated by \cref{alg:data}. The main subroutine of MM-QCELS is called quantum complex exponential least squares (QCELS): Given a data set $\left\{(t_{n},Z_{n})\right\}^{N}_{n=1}$ generated from Algorithm \ref{alg:data}, MM-QCELS obtains an estimate for the dominant eigenvalues by solving the following optimization problem:
\begin{equation}\label{eqn:op}
\left(\{r^*_k\}^{K}_{k=1},\{\theta^*_k\}^{K}_{k=1}\right)=\argmin_{r_k\in\mathbb{C},\theta_k\in\mathbb{R}}L_{K}\left(\{r_k\}^{K}_{k=1},\{\theta_k\}^{K}_{k=1}\right)\,.
\end{equation}
with loss function
\begin{equation}\label{eqn:loss_multi_modal}
L_{K}\left(\{r_k\}^{K}_{k=1},\{\theta_k\}^{K}_{k=1}\right)=\frac{1}{N}\sum^N_{n=1}\left|Z_n-\sum^{K}_{k=1}r_k\exp(-i\theta_k t_n)\right|^2\,.
\end{equation}
Choosing $K=|\mathcal{D}|$ and a proper generated data set $\left\{(t_{n},Z_{n})\right\}^{N}_{n=1}$\footnote{In~\cite[Algorithm 2]{Ding2023simultaneous}, the authors need to generate a sequence of data set $\left\{(t_{n},Z_{n})\right\}^{N}_{n=1}$ with different $T$ and $N$ to ensure Heisenberg limit scaling of the algorithm theoretically.}, \cite{Ding2023simultaneous} shows that the solution $\{\theta^*_k\}^K_{k=1}$ of~\eqref{eqn:op} is a good approximation to the set of dominant eigenvalues $\{\lambda_m\}_{m\in\mathcal{D}}$. MM-QCELS can reach Heisenberg limit scaling and small circuit depth when $\ptail\ll 1$. However, the algorithm requires a spectral gap assumption, meaning $T=\Omega(1/\epsilon)>\Deltam^{-1}$, to ensure that the optimization problem can differentiate between the dominant eigenvalues.

\item (QPE (textbook version~\cite{NielsenChuang2000})): {We provide a brief review of QPE for ground state energy estimation.} The quantum process involves a sequence of controlled time evolution operations $e^{-iH}$ on a state $\ket{0^d}|\psi\rangle=\sum^{M-1}_{m=0} c_{m}\ket{0^d}\left|\psi_{m}\right\rangle$. Here, $\left|\psi_{m}\right\rangle$ represents the eigenstates associated with eigenvalues $\lambda_{m}$. The resulting quantum state from these operations, prior to applying the inverse Quantum Fourier Transform (QFT), is expressed as follows:
\[
|\Psi\rangle=\frac{1}{\sqrt{N_t}} \sum_{j=-N_t / 2}^{N_t / 2-1}|j\rangle e^{-i j H}|\psi\rangle\,,
\]
where $N_t=2^d$. After applying the inverse QFT and measuring the ancilla register, the probability of obtaining outcome $k$ is given by:
\begin{equation}\label{eqn:PK}
P(k)=\sum^{M-1}_{m=0}\left|c_{m}\right|^{2} K_{N_t}\left(\frac{2 \pi k}{N_t}- \lambda_{m}\right),
\end{equation}
where $-N_t/2\leq k\leq N_t/2-1$, and $K_{N_t}$ is the squared and normalized Dirichlet kernel defined as $K_{N_t}(\theta)=\frac{\sin ^{2}(\theta N_t / 2)}{N_t^{2} \sin ^{2}(\theta / 2)}$. To simulate QPE classically, we sample this distribution $N_{QPE}$ times to obtain a set of samples $\{k_i\}^{N_{QPE}}_{i=1}$. The ground state energy can then be approximated as $\widetilde{\lambda}_0=\frac{2\pi\min_{i}k_i}{N_t}$.

We note that the original textbook version of QPE algorithm in \cite[Chapter 5.2]{NielsenChuang2000} is designed to estimate general eigenvalues, not necessarily the ground state energy. The version presented above is a variation specifically designed for ground state energy estimation. There are also other variations of QPE, such as the Gaussian/Kaiser window-based QPE \cite{Berry_2024, mcardle2022quantumstate}, where the authors implement different resource states to produce a more concentrated kernel and reduce $T_{\rm total}$. In addition, it might be possible to extend QPE to estimate multiple eigenvalues simultaneously: For example, in the ideal case where $p_{\text{tail}} = 0$ and $ \{2\pi\lambda_m\}_{m\in \mathcal{D}} $ are finite-digit numbers. In this ideal scenario, when $ N_t $ is sufficiently large, we have
\[
P(k) = \sum_{m\in\mathcal{D}} \left| c_{m} \right|^2 \delta_0\left( \frac{2\pi k}{N_t} - \lambda_m \right)\,.
\]
Therefore, measuring the ancilla qubits $\mathcal{O}(\mathrm{poly}(K))$ times is sufficient to obtain all the eigenvalues exactly.
In the non-ideal case, we need to design an efficient method to post-process the output of QPE and obtain accurate estimations of multiple eigenvalues. To the best of our knowledge, we are unaware of any such procedure that has clear complexity analysis. Thus, for simplicity, we mainly consider the simple version of QPE in our paper.

\item (ESPRIT~\cite{Stroeks_2022}): The dataset used in ESPRIT is similar to \cref{alg:main} and is also generated by \cref{alg:data}. The algorithm of ESPRIT relies on the construction and manipulation of the Hankel matrix: Given $T>0$ and an odd integer $N>0$, we first set $t_n=n\tau$ for $0\leq n\leq N$, where $\tau=T/N$, and construct Hankel matrix $\mathrm{H}\in\mathbb{C}^{\frac{N+1}{2}\times \frac{N+1}{2}}$ with $\mathrm{H}_{i,j}=Z_{t_{i+j}}$. Here, $Z_{t_n}$ is generated by \cref{alg:data}. We then find the singular value decomposition $\mathrm{H}=U\Sigma V^\dagger$ and define
\[
U_0=U[:-1,:K],\quad U_1=[1:,:K]\,.
\]
Here, $U_0$ contains first $\frac{N-1}{2}$ rows and first $K$ columns of $U$ and $U_1$ contains the last $\frac{N-1}{2}$ rows and first $K$ columns of $U$. Finally, we find the eigenvalues $\{\mu_k\}^K_{k=1}$ of $U^{-1}_1U_0$ ($U^{-1}_1$ is the pseudoinverse of $U_1$) and define $\{\vtheta_k=\mathrm{angle}(\mu_k)/\tau\}^K_{k=1}$. According to the results of classical signal processing~\cite{9000636}, we can demonstrate that, when $K=|\mathcal{D}|$ and $T,N$ are chosen properly, $\{\vtheta_k=\mathrm{angle}(\mu_k)/\tau\}^K_{k=1}$ is a set that is close to the set of dominant eigenvalues $\{\lambda_m\}_{m\in\mathcal{D}}$.

We would like to highlight that the original ESPRIT method falls short of achieving the Heisenberg limit scaling in the context of quantum phase estimation. In the original ESPRIT framework, the choice of $\tau=1$ is imperative to mitigate aliasing issues\footnote{It is not possible to differentiate between $\lambda_k$ and $\lambda_k+2\pi\tau$ in ESPRIT, as they produce the same data.}. Consequently, this requires $N=T$, $T_{\max}=T$, and results in $T_{\mathrm{total}}=\Theta(T^2_{\max})$. However, according to the generalized uncertainty relation~\cite{Braunstein1996GeneralizedUR}, there exists a uniform complexity lower bound for phase estimation~\cite{PhysRevLett.96.010401}, asserting that the square of the error is at least $\Omega\left(T_{\mathrm{total}}^{-1}T_{\max}^{-1}\right)$ in expectation. By combining $T_{\mathrm{total}}^{-1}T_{\max}^{-1}=\mathcal{O}(\epsilon^2)$ and $T_{\mathrm{total}}=\Theta(T^2_{\max})$, we deduce $T_{\mathrm{total}}=\Omega(\epsilon^{-4/3})$. The crux of the matter, as deduced from the previous analysis, is that the necessary selection of $\tau=1$ to avoid aliasing poses a significant hurdle to ESPRIT in achieving the Heisenberg limit.

More recently,~\cite{ni2023lowdepth_2} proposes a multilevel ESPRIT approach to circumvent aliasing issues without enforcing $\tau=1$. Notably, ~\cite{ni2023lowdepth_2} generates a sequence of datasets with carefully chosen values for $T$ and $N$, progressively refining the estimation of dominant eigenvalues. The successful application of multilevel techniques enables them to achieve Heisenberg-limited scaling and shorter circuit depth. In our numerical simulations presented in \cref{sec:num}, for simplicity, we only consider the original ESPRIT and choose $N=T$ and $\tau=1$ to mitigate aliasing issues associated with ESPRIT.

\end{itemize}

\section{Algorithm for integer-power setting}\label{sec:intpower}
In certain phase estimation tasks, only a black box unitary $U$ represented by a quantum circuit can be accessed.
Under this setting, querying an arbitrary real power of $U$ is not feasible. Instead, only integer powers of it can be acquired. To maintain consistency with the notation used in the case where $U=e^{-iH}$, we still assume $U\ket{\psi_m} = e^{-i\la_m}\ket{\psi_m}$. In this case, we aim to recover the phases $\lambda_m \mod 2\pi$. All other parameters, such as $p_{\min}$, $\ptail$, and $\mathcal{D}$, retain the same definitions as in the real-power setting unless explicitly stated otherwise. We define the$\mod 2\pi$ distance of two numbers $u$ and $v$ as
$$|u-v|_{2\pi}:= \min\{|u-v \mod 2\pi|,\, |v-u \mod 2\pi|\}.$$

The idea is similar to the real power setting, which is to leverage the Gaussian-filtered spectral density. The difference is that we will use the periodic Gaussian
\begin{equation}
    \phi_p(x) =W\sum_{j\in\ZZ} e^{- \frac{(x+2j\pi)^2T^2}{2}},
\end{equation}
where $W<1$ is a normalizing constant such that $\phi_p(0) = 1$. It is clear that $\phi_p$ is $2\pi$-periodic, and its Fourier coefficients are
\begin{equation}
    \hat{\phi}_p(k) = \frac{1}{2\pi}\int_0^{2\pi}\phi_p(x)e^{-ikx}\d x = \frac{W}{\sqrt{2\pi}T} e^{-\frac{k^2}{2T^2}}.
\end{equation}
Therefore, we have $\sum_{k\in\ZZ}\hat{\phi}_p(k) = \phi_p(0) = 1$. Let
$$a(k) = \begin{cases}
    \hat{\phi}_p(0)+\sum_{|j|>\sigma T}\hat{\phi}_p(j)& k=0,\\
    \hat{\phi}_p(k)& 1\le|k|\le \sigma T,\\
    0&|k|>\sigma T.
\end{cases}$$
be a distribution over $\ZZ$, and $t$ be a random variable sampled from this distribution. $Z_t$ denote the unbiased estimation of $\braket{\psi|U^{t}|\psi}$ obtained by the Hadamard test. Therefore, the maximal quantum runtime $T_{\max}$ is bounded by $\sigma T$, as similar to the real-power setting. Except for the distribution of $t$, the rest part of the phase estimation algorithm goes the same as in \Cref{alg:main}.  We may also establish theorems that guarantee the performance of this slightly modified algorithm. First, we establish a lemma that gives several properties of the periodic Gaussian function $\phi_p(x)$.

\begin{lemma}\label{lemma:periodic guassian bound}
    If $T\ge 1$, then $\phi_p(x)$ is increasing on $[-\pi,0]$ and decreasing on $[0,\pi]$. For $x\in [-\frac{2\pi}{3},\frac{2\pi}{3}]$,
    \begin{equation}
        e^{-\frac{x^2T^2}{2}} \le \phi_p(x) \le 1.01 e^{-\frac{x^2T^2}{2}},\label{ineq:gauss bound}.
    \end{equation}
\end{lemma}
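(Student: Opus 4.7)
The plan is to split the lemma into the two-sided bound and the monotonicity, using a different tool for each. Throughout, let $W = \bigl(\sum_{j\in\ZZ} e^{-2\pi^2 j^2 T^2}\bigr)^{-1}$ be the normalizing constant guaranteed by $\phi_p(0)=1$; this $W$ lies in $(0,1]$, and for $T\ge 1$ one has $W = 1 + O(e^{-2\pi^2})$, essentially $1$.

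For the lower bound $\phi_p(x)\ge e^{-x^2 T^2/2}$ (which actually holds for all $x\in\R$ and all $T>0$), I would apply convexity of $\exp$ in the form $\tfrac12(e^A + e^B)\ge e^{(A+B)/2}$ to the pair of exponentials at indices $j$ and $-j$. A direct calculation gives
\[
e^{-(x+2\pi j)^2 T^2/2} + e^{-(x-2\pi j)^2 T^2/2} \;\ge\; 2 e^{-x^2 T^2/2}\,e^{-2\pi^2 j^2 T^2},
\]
and summing over $j\ge 1$, together with the identity $\sum_{j\in\ZZ} e^{-2\pi^2 j^2 T^2} = 1/W$, yields the claim. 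For the upper bound on $[-2\pi/3, 2\pi/3]$, I factor out the leading Gaussian to write
\[
\frac{\phi_p(x)}{e^{-x^2 T^2/2}} \;=\; W\Big(1 + \sum_{j\ne 0} e^{-2\pi^2 j^2 T^2 - 2\pi j x T^2}\Big),
\]
and use $|x|\le 2\pi/3$ to bound the exponent by $-2\pi^2 T^2 |j|(|j|-2/3)$. The $|j|=1$ terms dominate; their contribution is at most $2 e^{-2\pi^2 T^2/3}$, while the $|j|\ge 2$ tail is geometrically smaller. For $T\ge 1$ this sum is numerically $<0.003$, and combined with $W\le 1$ the ratio is at most $1.01$.

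For the monotonicity I would pass through Poisson summation to identify $\phi_p$ with a Jacobi theta function and then apply the Jacobi triple product identity. Setting $q = e^{-1/(2T^2)}\in(0,1)$ and $z = x/2$, Poisson summation gives
\[
\phi_p(x) \;=\; \frac{W}{T\sqrt{2\pi}}\,\vartheta_3(z, q), \qquad \vartheta_3(z,q) := 1 + 2\sum_{k\ge 1} q^{k^2}\cos(2kz),
\]
and the Jacobi triple product identity expresses this as
\[
\vartheta_3(z, q) \;=\; \prod_{n\ge 1}(1 - q^{2n})\,\prod_{n\ge 1}\bigl[(1+q^{2n-1})^2 - 4 q^{2n-1}\sin^2 z\bigr].
\]
Each factor in the second product is strictly positive (since $4q^{2n-1}\le (1+q^{2n-1})^2$ by AM-GM) and strictly decreasing as $\sin^2 z$ grows. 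Because $\sin^2 z$ is strictly increasing on $[0,\pi/2]$, every factor decreases there, hence so does $\vartheta_3(\cdot, q)$. Translated back via $z = x/2$, this says $\phi_p$ is strictly decreasing on $[0,\pi]$, and evenness of $\phi_p$ (inherited from $e^{-y^2 T^2/2}$) gives the monotone increase on $[-\pi,0]$.

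The main obstacle is precisely this monotonicity step. Direct pair-by-pair bounds on the real-space sum $\sum_j (x+2\pi j)\, e^{-(x+2\pi j)^2 T^2/2}$ suffer from a near-cancellation between the $j=0$ and $j=-1$ contributions as $x\to\pi$ (they exactly cancel at $x=\pi$, consistent with $\phi_p'(\pi)=0$), which renders naive mean-value-theorem or convexity estimates too loose to close the inequality uniformly on $[0,\pi]$. The triple-product route sidesteps this entirely by displaying $\phi_p$ as an infinite product of elementary factors, each manifestly decreasing, so the global monotonicity follows without fighting the endpoint cancellation.
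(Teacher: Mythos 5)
Your proof is correct, and it diverges from the paper's in two substantive ways. For the lower bound, you pair the $j$ and $-j$ terms and apply convexity of $\exp$, which gives $\phi_p(x)\ge e^{-x^2T^2/2}$ for all $x\in\R$ and all $T>0$ in one line; the paper instead proves the same inequality only on $[-\pi,\pi]$ by noting equality at $x=0$ and showing $\frac{\d}{\d x}\phi_p(x)\ge \frac{\d}{\d x}e^{-T^2x^2/2}$ on $[0,\pi]$ (using that $y\mapsto ye^{-T^2y^2/2}$ is decreasing for $y\ge 1/T$, which is where the hypothesis $T\ge 1$ enters their argument). Your pairing argument is cleaner and strictly more general. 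The upper bound is essentially the same computation in both proofs: factor out $e^{-x^2T^2/2}$, use $W\le 1$, and bound the remaining sum $\sum_{j\ne 0}e^{-2\pi T^2 j(\pi j+x)}$ for $|x|\le 2\pi/3$ by a small numerical constant; your exponent bound $-2\pi^2T^2|j|(|j|-2/3)$ and the paper's term-by-term estimates give the same $<1.01$. The real difference is the monotonicity claim: the paper does not prove it at all but cites it to an external reference (\cite[Lemma 2]{li2023101577}), whereas you give a self-contained proof by passing through Poisson summation to $\vartheta_3(x/2,e^{-1/(2T^2)})$ and invoking the Jacobi triple product, under which each factor $(1+q^{2n-1})^2-4q^{2n-1}\sin^2(x/2)$ is manifestly positive and strictly decreasing in $\sin^2(x/2)$. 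This buys a complete argument valid for every $T>0$ (not just $T\ge 1$) and, as you note, sidesteps the near-cancellation between the $j=0$ and $j=-1$ real-space terms near $x=\pi$ that defeats naive derivative estimates; the cost is importing the (classical but nontrivial) triple product identity where the paper simply outsources the step.
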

\begin{proof}
    The monotonicity part is proved in \cite[Lemma 2]{li2023101577}, where only some normalizing constants differ. Without loss of generality, we will prove the rest of the lemma assuming $x\ge 0$ since $\phi_p(x)$ is an even function. For the left part of \eqref{ineq:gauss bound}, we need to notice that when $x=0$, the equality holds. Moreover, when $x\in[0,\pi]$, we have
    \[
    \begin{aligned}
        \frac{\d}{\d x}\phi_p(x) &= WT^2\left(-xe^{-\frac{T^2x^2}{2}}+\sum_{j=1}^{+\infty}\left(-(2j\pi+x)e^{-\frac{T^2(2j\pi+x)^2}{2}} + (2j\pi-x)e^{-\frac{T^2(2j\pi-x)^2}{2}}\right)\right)\\
        &\ge WT^2\left(-xe^{-\frac{T^2x^2}{2}}\right)\ge -T^2xe^{-\frac{T^2x^2}{2}} = \frac{\d}{\d x}e^{-\frac{T^2x^2}{2}},
    \end{aligned}
    \]
    where in the first inequality, we used the fact that each term of the summation is positive, and in the second inequality, we used $W\le 1$. To see this, we may introduce the function $h(y):= ye^{-\frac{T^2y^2}{2}}$, which is decreasing when $y\ge 1 \ge\frac{1}{T}$, and thus $h(2j\pi-x)\ge h(2j\pi+x)$ for all $j\ge 1$.

    Next, we prove the right part of \eqref{ineq:gauss bound}. This can be done by the following calculation.
    \[
    \begin{aligned}
        \phi_p(x)e^{\frac{x^2T^2}{2}} &= W\sum_{j\in\ZZ}e^{-2T^2\pi(\pi j^2+xj)} \le \sum_{j\in\ZZ}e^{-2T^2\pi(\pi j^2+xj)}\\
        & = 1+e^{-2T^2\pi(\pi-x)}+\sum_{j = 1}^{+\infty}e^{-2T^2\pi(\pi j^2+xj)} + \sum_{j = 2}^{+\infty}e^{-2T^2\pi(\pi j^2-xj)}\\
        &\le 1+e^{-2\pi(\pi/3)}+\sum_{j = 1}^{+\infty}e^{-2\pi j} + \sum_{j = 2}^{+\infty}e^{-2\pi j} < 1.01.
    \end{aligned}
    \]
\end{proof}

Now, we may define the error function
$$E\left(\theta\right)=\frac{1}{N}\sum^N_{n=1}Z_n\exp(i\theta t_n)-\sum^M_{m=1}p_m\phi_p(\theta-\la_m)$$
and $E_j := E(\theta_j)$. Then we have the following lemma similar to \Cref{lem:truncation_error}.

\begin{lemma}\label{lem:int_truncation_error} Given $\delta>0$ and the pair of overlap-eigenvalues $\left\{\left(p_i,\lambda_i\right)\right\}^M_{m=1}$, if $\sigma=\Omega\left(\log^{1/2}\left(1/\delta\right)\right)$, we have
	\begin{equation}\label{eqn:int_time_truncation_error}
		\left|\mathbb{E}\left(\frac{1}{N}\sum^N_{n=1}Z_n\exp(i\theta_j t_n)\right)-\sum^M_{m=1}p_m\phi_p(\theta-\la_m) \right|\leq \delta^2\,.
	\end{equation}
	Furthermore, given $\eta>0$ and $\Theta=\left\{\theta_j\right\}^J_{j=1}\cup \left\{\lambda_m\right\}_{m\in\mathcal{D}}$, if $N=\Omega\left(\frac{1}{\delta^2}\log\left(\left(\frac{T}{q}+|\mathcal{D}|\right)\frac{1}{\eta}\right)\right)$, we have
	\begin{equation}\label{eqn:int_random_error}
		\mathbb{P}\left(\max_{\theta\in\Theta}\left|E(\theta)\right|\leq \delta\right)\geq 1-\eta\,.
	\end{equation}
\end{lemma}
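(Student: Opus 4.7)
The plan is to mirror the proof of \cref{lem:truncation_error}, separately establishing the deterministic Fourier-truncation estimate \eqref{eqn:int_time_truncation_error} and then boosting it to the uniform high-probability statement \eqref{eqn:int_random_error} via a Hoeffding-plus-union-bound argument.

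For \eqref{eqn:int_time_truncation_error}, the key observation is that by Poisson summation (or directly from the definition of $\phi_p$ together with the computation of $\hat\phi_p$ already carried out before \cref{lemma:periodic guassian bound}), the periodic Gaussian admits the exact Fourier series $\phi_p(x)=\sum_{k\in\mathbb{Z}}\hat\phi_p(k)e^{ikx}$. Combining this with $\braket{\psi|U^k|\psi}=\sum_m p_m e^{-ik\lambda_m}$ and the definition of the distribution $a(k)$, I will compute
\begin{equation*}
\mathbb{E}\!\left(\frac{1}{N}\sum_{n=1}^N Z_n e^{i\theta t_n}\right)=\sum_{|k|\le \sigma T} a(k)\sum_{m=1}^M p_m\, e^{ik(\theta-\lambda_m)},
\end{equation*}
while the target quantity equals $\sum_m p_m\sum_{k\in\mathbb{Z}}\hat\phi_p(k)e^{ik(\theta-\lambda_m)}$. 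Since $a(k)=\hat\phi_p(k)$ for $1\le|k|\le\sigma T$ and $a(0)-\hat\phi_p(0)=\sum_{|j|>\sigma T}\hat\phi_p(j)$ by construction, every interior term cancels and the residual collapses to
\begin{equation*}
\sum_{m=1}^M p_m\sum_{|k|>\sigma T}\hat\phi_p(k)\bigl(1-e^{ik(\theta-\lambda_m)}\bigr),
\end{equation*}
whose modulus is at most $2\sum_{|k|>\sigma T}\hat\phi_p(k)$. Comparing this sum with its Gaussian-integral analog and using $\int_\sigma^\infty e^{-u^2/2}\,du\le\sqrt{\pi/2}\,e^{-\sigma^2/2}$ yields a bound of order $e^{-\sigma^2/2}$, which is $\le\delta^2$ whenever $\sigma=\Omega(\log^{1/2}(1/\delta))$.

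For \eqref{eqn:int_random_error}, I will apply Hoeffding's inequality to the real and imaginary parts of each $Z_n e^{i\theta t_n}$ (which are iid across $n$ and bounded by $\sqrt{2}$), obtaining $\mathbb{P}[|E(\theta)-\mathbb{E}E(\theta)|>\delta/2]\le 4\exp(-cN\delta^2)$ for an absolute constant $c$. Combined with the bound $|\mathbb{E}E(\theta)|\le\delta^2\le\delta/2$ from the first half and a union bound over the $|\Theta|=O(T/q+|\mathcal{D}|)$ candidate points, the choice $N=\Omega(\delta^{-2}\log((T/q+|\mathcal{D}|)/\eta))$ drives the failure probability below $\eta$. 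The only genuinely new step compared with \cref{lem:truncation_error} lies in the Fourier-truncation bookkeeping: the integer-power setting uses a Fourier series rather than a Fourier transform, and the atom at $k=0$ of $a(k)$ has been deliberately inflated to absorb the entire tail $\sum_{|j|>\sigma T}\hat\phi_p(j)$ so that the interior cancellation is exact and only a Gaussian tail survives. Once this bookkeeping is tracked carefully, the concentration half is essentially identical to the real-power proof and can be invoked verbatim from \cite[Appendix B.3, Lemma 4]{ding2023robust}, which is the main reason I expect the truncation step, not the concentration step, to be the principal (though still minor) obstacle.
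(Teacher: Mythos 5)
Your proof is correct and follows essentially the same route as the paper's: exact cancellation of the interior Fourier modes (with the $k=0$ atom absorbing the truncated mass), a Gaussian-integral bound on the tail $\sum_{|k|>\sigma T}\hat{\phi}_p(k)$, and a Hoeffding-plus-union-bound argument over $\Theta$ (which the paper simply imports from \cite{ding2023robust}) for the concentration half. If anything, your identification of the residual as $\sum_m p_m\sum_{|k|>\sigma T}\hat{\phi}_p(k)\bigl(1-e^{ik(\theta-\lambda_m)}\bigr)$ is slightly more careful than the paper's displayed identity, which drops the contribution of the inflated $k=0$ atom; the resulting extra factor of $2$ is harmless given $\sigma=\Omega(\log^{1/2}(1/\delta))$.
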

\begin{proof} According to Algorithm \ref{alg:data}, when $\sigma=\Omega\left(\log^{1/2}\left(1/\delta\right)\right)$, we have
	\begin{equation}
		\begin{aligned}
			&\left|\mathbb{E}\left(\frac{1}{N}\sum^N_{n=1}Z_n\exp(i\theta_j t_n)\right)-\sum^M_{m=1}p_m\phi_p(\theta-\la_m)\right|\\
			=&\left|\sum_{m=1}^M p_m\sum_{|k|>\sigma T}\hat{\phi}_p(k)e^{ik(\theta-\la_m)} \right|
			\le  \sum_{|k|>\sigma T}\hat{\phi}_p(k)e^{-\frac{t^2}{2T^2}}\left|\sum^M_{m=1}p_me^{ik(\theta-\la_m)}\right|\le\sum_{|k|>\sigma T}\hat{\phi}_p(k)\\
			=&\sum_{|k|>\sigma T}\frac{W}{\sqrt{2\pi}T} e^{-\frac{k^2}{2T^2}} \le 2\int_{\sigma T}^\infty\frac{1}{\sqrt{2\pi} T}e^{-\frac{t^2}{2T^2}}\d t=\sqrt{\frac{2}{\pi}}\int_{\sigma}^\infty e^{-\frac{s^2}{2}}\d s< e^{-\sigma^2}<\delta^2,		\end{aligned}
	\end{equation}
	where we used the fact that $W<1$ and bounded the summation using integration. This proves \eqref{eqn:int_time_truncation_error}. The proof of \eqref{eqn:int_random_error} is the same as the proof of \cite[Appendix B.3 Lemma 4 eqn. (B8)]{ding2023robust}, thus, we omit it here.
\end{proof}
Define the magnitude function:
\[
G(\theta) = \left|\frac{1}{N}\sum^N_{n=1}Z_n\exp(i\theta t_n)\right|\,.
\]
and $G_j:= G(\theta_j)$.

Finally, we have a similar theorem as \Cref{thm:regime_1}. This shows this algorithm restricted to integer powers of $U$ can also achieve the Heisenberg limit without any gap assumptions.

\begin{theorem}[$\forall T\ge 1$]\label{thm:int_regime_1} Assume $p_{\min}>\ptail$ and $|\mathcal{D}|\leq K$. Given the probability of failure $\eta>0$, we choose the following parameters:
\begin{itemize}
    \item Block constant: $\alpha=\Omega\left(\log^{1/2}\left(\frac{1}{p_{\min}-\ptail}\right)\right)$,
    \item Searching parameter: $q=\mathcal{O}\left(\log^{1/2}\left(\frac{p_{\min}}{\ptail+(p_{\min}-\ptail)/2}\right)\right)$, $q<\alpha$, and $\alpha/q\in\mathbb{N}$,
    \item Time truncation parameter: $\sigma=\Omega\left(\log^{1/2}\left(\frac{1}{p_{\min}-\ptail}\right)\right)$,
    \item Number of samples: $N=\Omega\left(\frac{1}{(p_{\min}-\ptail)^2}\log\left(\left(\frac{T}{q}+|\mathcal{D}|\right)\frac{1}{\eta}\right)\right)$.
\end{itemize}
Then, with probability at least $1-\eta$, we have that for each $i\in\mathcal{D}$, there exists $1\leq k_i\leq |\mathcal{D}|$ such that
\begin{equation}\label{eqn:int_distance_bound_regime_1}
\left|\lambda_i-\vtheta_{k_i}\right|_{2\pi}\leq \frac{\alpha}{T}\,.
\end{equation}
In particular, for any $\epsilon>0$, to achieve
\[
\left\{\lambda_m\right\}_{m\in\mathcal{D}}\subset \cup_{k}[\vtheta_k-\epsilon,\vtheta_k+\epsilon]\mod 2\pi,
\]
it suffices to choose
\[
T_{\max}=\widetilde{\Theta}\left(\frac{1}{\epsilon}\right),\quad T_{\mathrm{total}}=\widetilde{\Theta}\left(\frac{1}{(p_{\min}-\ptail)^2\epsilon}\right)\,,
\]
where the logarithmic factor is omitted.
\end{theorem}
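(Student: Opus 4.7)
The plan is to mirror the proof of \cref{thm:regime_1} almost line by line, replacing the Gaussian filter $F(x)=\exp(-T^2x^2/2)$ with the periodic Gaussian $\phi_p(x)$ and ordinary distance with the $2\pi$-distance $|\cdot|_{2\pi}$. First, I would apply \cref{lem:int_truncation_error} with $\delta = \Theta(p_{\min}-\ptail)$ and $J = \mathcal{O}(T/q)$, so that with probability $\geq 1-\eta$ we have $\max_{\theta\in\Theta}|E(\theta)| < (p_{\min}-\ptail)/8$. From this point on the argument is deterministic: the quantity $G(\theta_j)$ differs from $\sum_m p_m \phi_p(\theta_j - \lambda_m)$ by at most $(p_{\min}-\ptail)/8$ on all grid points $\theta_j$ and on all dominant eigenvalues $\lambda_m$.

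Next I would establish two regime estimates for $G_j$ using \cref{lemma:periodic guassian bound}. Since $T\geq 1$ and $q,\alpha$ are $\mathcal{O}(1)$ parameters chosen so that $q/T$ and $\alpha/(3T)$ both lie well inside $[-2\pi/3,2\pi/3]$, the lemma gives the two-sided bound $e^{-T^2x^2/2} \leq \phi_p(x) \leq 1.01\,e^{-T^2x^2/2}$ for the relevant arguments, while monotonicity of $\phi_p$ on $[0,\pi]$ handles any $x$ with $|x|_{2\pi}\geq \alpha/(3T)$ by reducing to $\phi_p(\alpha/(3T))$. Consequently:
\begin{itemize}
\item[(i)] If $\theta_j$ satisfies $|\theta_j - \lambda_m|_{2\pi}\leq q/T$ for some $m\in\mathcal{D}$, then $G_j \geq p_{\min}e^{-q^2/2} - (p_{\min}-\ptail)/8 \geq \ptail + 3(p_{\min}-\ptail)/8$, by our choice of $q$.
\item[(ii)] If $|\theta_j - \lambda_m|_{2\pi}\geq \alpha/(3T)$ for every $m\in\mathcal{D}$, then $G_j \leq 1.01\,e^{-\alpha^2/18} + \ptail + (p_{\min}-\ptail)/8 \leq \ptail + (p_{\min}-\ptail)/4$, by our choice of $\alpha$.
\end{itemize}
Since $q<\alpha/3$, these two bounds are separated, so any grid point close (within $q/T$ in the $2\pi$-sense) to a dominant eigenvalue beats any grid point far (at distance $\geq \alpha/(3T)$) from all dominant eigenvalues.

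Then I would run the contradiction argument from the proof of \cref{thm:regime_1}, now on the circle $\R/2\pi\Z$. Suppose some $\lambda_{m^\star}$ with $m^\star\in\mathcal{D}$ is not covered by $\bigcup_{k=1}^{|\mathcal{D}|}[\vtheta_k - \alpha/T,\vtheta_k+\alpha/T]$ (mod $2\pi$). Because $\alpha/q \in \mathbb{N}$ and the block set is a union of open arcs of length $2\alpha/T$ centered at grid points, the grid point $\vtheta^\star$ nearest $\lambda_{m^\star}$ lies outside the block set and satisfies $|\vtheta^\star - \lambda_{m^\star}|_{2\pi}\leq q/T$; so by (i) the algorithm would have preferred $\vtheta^\star$ at some step. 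By (ii), every selected peak $\vtheta_k$ must therefore satisfy $|\vtheta_k - \lambda_{m_k}|_{2\pi}\leq \alpha/(3T)$ for some $m_k\in\mathcal{D}\setminus\{m^\star\}$. Pigeonhole on $|\mathcal{D}|-1$ possible targets forces two distinct selected peaks $\vtheta_{k_1},\vtheta_{k_2}$ to sit within $\alpha/(3T)$ of the same $\lambda_{m'}$, giving $|\vtheta_{k_1}-\vtheta_{k_2}|_{2\pi}\leq 2\alpha/(3T)$, which contradicts the blocking rule that forbids $\vtheta_{k_2}$ from lying inside the arc of radius $\alpha/T$ around $\vtheta_{k_1}$. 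This establishes \eqref{eqn:int_distance_bound_regime_1}, and the final complexity bounds follow by choosing $T = \Theta(\alpha/\epsilon)$ exactly as in \cref{thm:regime_1}.

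The main obstacle I anticipate is \textbf{book-keeping on the circle}: the upper-bound step for $\phi_p$ in \cref{lemma:periodic guassian bound} is only stated on $[-2\pi/3,2\pi/3]$, so I need to separately argue that if $|x|_{2\pi}\geq \alpha/(3T)$ then the value $\phi_p(x)$ is still controlled by $1.01\,e^{-\alpha^2/18}$ — this uses monotonicity of $\phi_p$ on $[0,\pi]$ combined with periodicity to reduce to $x = \alpha/(3T)$, which under the assumption $T\geq 1$ and $\alpha = \mathcal{O}(1)$ lies safely in the regime where the Gaussian upper bound applies. A second minor subtlety is confirming that the ``closest grid point'' $\vtheta^\star$ to $\lambda_{m^\star}$ is automatically outside the block set: this relies on the integrality condition $\alpha/q\in\mathbb{N}$ so that the blocked arcs are unions of grid cells, hence the nearest grid point to an uncovered $\lambda_{m^\star}$ cannot itself be blocked.
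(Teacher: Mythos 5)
Your proposal is correct and follows essentially the same route as the paper: apply \cref{lem:int_truncation_error} to reduce to the event $\max_\theta|E(\theta)|<(p_{\min}-\ptail)/8$, establish the same two-class separation of $G_j$ using the two-sided Gaussian bound and monotonicity of $\phi_p$ from \cref{lemma:periodic guassian bound}, and then rerun the contradiction/pigeonhole argument of \cref{thm:regime_1} on the circle. The two subtleties you flag (controlling $\phi_p$ outside $[-2\pi/3,2\pi/3]$ via monotonicity, and using $\alpha/q\in\mathbb{N}$ to keep the nearest grid point unblocked) are exactly the points the paper's proof relies on.
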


\begin{proof}
Similar to the proof of \Cref{thm:regime_1}, we only need to prove
\begin{equation}\label{eqn:int_separation_regime_1}
\max_{\theta_j\notin \cup_{m\in\mathcal{D}}\left[\lambda_m-\frac{\alpha}{3T},\lambda_m+\frac{\alpha}{3T}\right]}G_j<\min_{\theta_j\in \cup_{m\in\mathcal{D}}\left[\lambda_m-\frac{q}{T},\lambda_m+\frac{q}{T}\right]} G_j\,
\end{equation}
under the assumption $E_j<\frac{p_{\min}-\ptail}{8}$. The rest of the proof is the same as in \Cref{thm:regime_1}. We also consider two classes of candidates:
\begin{itemize}
    \item When $\theta_j\in \cup_{m\in\mathcal{D}}\left[\lambda_m-\frac{q}{T},\lambda_m+\frac{q}{T}\right]\mod 2\pi$, we obtain
    \[
    \begin{aligned}
    G_j&=G(\theta_j)\geq \left|\sum^M_{m=1}p_m\phi_p\left(\lambda_m-\theta_j\right)\right|-E_j\\
    &> p_{\min}\exp\left(-\frac{q^2}{2}\right)-\frac{p_{\min}-\ptail}{8}\\
    &\geq  \ptail+\frac{p_{\min}-\ptail}{2}-\frac{p_{\min}-\ptail}{8}\geq \ptail+\frac{3(p_{\min}-\ptail)}{8}\,.
    \end{aligned}
    \]
    where we used \eqref{ineq:gauss bound} and $E_j<\frac{p_{\min}-\ptail}{8}$ in the second inequality, and the condition of $q$ in the last inequality.

    \item When $\theta_j\notin \cup_{m\in\mathcal{D}}\left[\lambda_m-\frac{\alpha}{3T},\lambda_m+\frac{\alpha}{3T}\right]\mod 2\pi$, we obtain
    \[
    \begin{aligned}
    G_j&\leq \left|\sum_{m\in\mathcal{D}}p_m\phi_p\left(\lambda_m-\theta_j\right)\right|+\left|\sum_{m\in\mathcal{D}^ c}p_m\phi_p\left(\lambda_m-\theta_j\right)\right|+E_j\\
    &\leq 1.01\exp\left(-\frac{\alpha^2}{18}\right)+\ptail+\frac{p_{\min}-\ptail}{8}\leq \ptail+\frac{p_{\min}-\ptail}{4}\,,
    \end{aligned}
    \]
    where we used \Cref{lemma:periodic guassian bound} and $E_j<\frac{p_{\min}-\ptail}{8}$ in the second inequality and the condition of $\alpha$ in the last inequality.
\end{itemize}
Therefore, \eqref{eqn:int_separation_regime_1} is proved, and we complete the proof of the theorem.

\end{proof}

\end{document}